\providecommand{\U}[1]{\protect\rule{.1in}{.1in}}
\newtheorem{theorem}{Theorem}
\newtheorem{algorithm}{Algorithm}
\newtheorem{lemma}{Lemma}
\newtheorem{proposition}{Proposition}
\newtheorem{remark}{Remark}
\newenvironment{proof}[1][Proof]{\noindent\textbf{#1.} }{\ \rule{0.5em}{0.5em}}
\DeclareMathOperator{\Tr}{Tr}
\begin{document}

\title{\vspace{-1.2in}\textbf{Efficient quantum algorithms for testing symmetries of open quantum systems}}
\author{Rahul Bandyopadhyay\thanks{Department of Electrical and Computer Engineering, University of California, Davis, California 95616, USA} \and
Alex H. Rubin\thanks{Department of Physics and Astronomy, University of California, Davis, California 95616, USA} \footnotemark[1] \and
Marina Radulaski\footnotemark[1] \and
Mark M.~Wilde\thanks{School of Electrical and Computer Engineering, Cornell University, Ithaca, New York 14850, USA}}
\maketitle

\begin{abstract}
Symmetry is an important and unifying notion in many areas of physics. In quantum mechanics, it is possible to eliminate degrees of freedom from a system by leveraging symmetry to identify the possible physical transitions. This allows us to simplify calculations and characterize potentially complicated dynamics of the system with relative ease. Previous works have focused on devising quantum algorithms to ascertain symmetries by means of fidelity-based symmetry measures. In our present work, we develop alternative symmetry testing quantum algorithms that are efficiently implementable on quantum computers. Our approach estimates asymmetry measures based on the Hilbert--Schmidt distance, which is significantly easier, in a computational sense, than using fidelity as a metric.  The method is derived to measure symmetries of states, channels,  Lindbladians, and measurements. We apply this method to a number of scenarios involving open quantum systems, including the amplitude damping channel and a spin chain, and we test for symmetries within and outside the finite symmetry group of the Hamiltonian and Lindblad operators.
\end{abstract}

\begin{quote}
\textit{We dedicate our paper to the memory of G\"oran Lindblad (July~9, 1940--November~30, 2022), whose profound contributions to quantum information science, in the form of the Lindblad master equation~\cite{Lindblad1976OnSemigroups} and the data-processing inequality for quantum relative entropy \cite{Lin75}, will never be forgotten.}
\end{quote}

\newpage

\tableofcontents

\section{Introduction}

Symmetry is a fundamental concept in physics, simplifying our understanding of
the physical world \cite{FR96,Gross96}. In quantum mechanics especially, symmetry is helpful for determining which physical transitions are allowed \cite{Wick1952,PhysRev.155.1428,BRS07} or in
reducing the number of degrees of freedom needed to express a given physical
system, thus making it easier to solve equations or optimization problems. In practical considerations, the interaction of the system with the environment can lead to a loss of symmetry, or yet, enforce certain symmetries (Figure \ref{fig:symmetry-concept-fig}). As
such, the concept of symmetry has carried over to quantum information
processing \cite{marvian2012symmetry}, for understanding phenomena like entanglement \cite{W89a,EW01,DPS02,DPS04,DPS05,CK07},
coherence \cite{LKJR15,MS16,SAP17}, and reference frames \cite{BRS07,GS08}. The essential role of
symmetry has elevated the concept itself to the status of a quantum resource
theory \cite{MS13,MS14}, in which objects possessing symmetry are considered freely available
and those that break symmetry have value. Most recently, symmetry is
being used in quantum machine learning to improve the trainability of learning
algorithms \cite{LSSVCC22,MMGMAWE23,Skolik2023}.

\begin{figure}
    \centering
    \includegraphics[width=\textwidth]{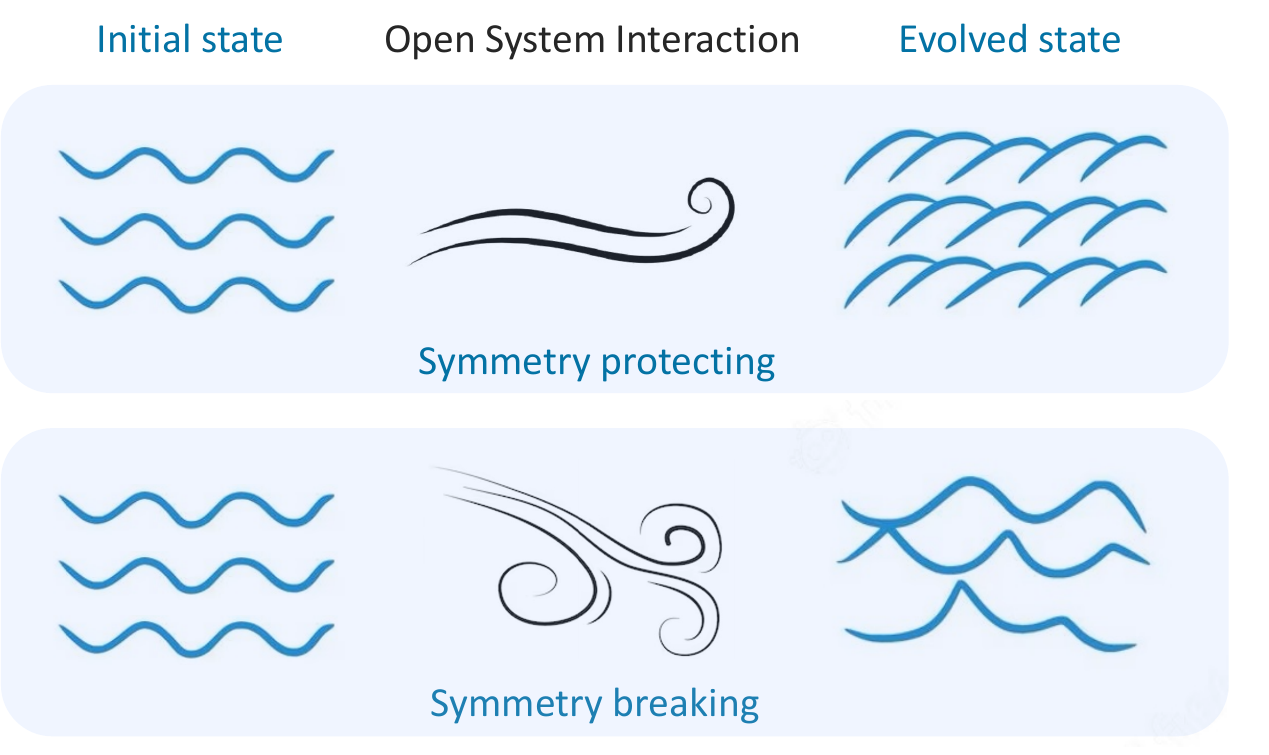}
    \caption{\small Interactions of systems with the environment can be symmetry preserving or symmetry breaking. The figure depicts an illustrative example of water waves interacting with wind that blows along different directions, potentially preserving or breaking the initial symmetry. In the first example (top), the wind preserves the symmetric structure of the water waves, so that the wind acts as a covariant channel, while in the second example (bottom), the wind is too chaotic, breaks the symmetry, and thus does not act as a covariant channel.}
    \label{fig:symmetry-concept-fig}
\end{figure}

Motivated by its fundamental role  in physics and related fields,
the authors of \cite{LRW22,laborde2022hamiltonian} (cf.~\cite{laborde2023menagerie}) developed several quantum algorithms for testing
symmetry of states, Hamiltonians, channels, and measurements on quantum
computers, and a sequel paper places the related problems in the context of
quantum computational complexity theory \cite{LRW23}. A number of these
algorithms are efficiently realizable on quantum computers, while others 
have computational complexity provably beyond that of the standard
BQP complexity class and thus are believed to be difficult even for quantum computers to solve (here, BQP stands for bounded error quantum polynomial time; see \cite{W09,VW15} for reviews on quantum computational complexity theory). Another contribution of \cite{LRW22} was to develop variational quantum algorithms for these
more difficult problems, by replacing the operations of an unbounded ``prover'' with parameterized quantum circuits; this approach works well in certain instances but does not lead to
provable computational runtimes (see \cite{CABBEFMMYCC20,bharti2021noisy} for
reviews of variational quantum algorithms).

One of the main contributions of the present paper is to develop alternative
symmetry-testing algorithms that can be efficiently
implemented on quantum computers. In contrast to the prior approaches from \cite{LRW22,laborde2022hamiltonian},
we modify the measure being
estimated by a quantum computer. Whereas all of the algorithms from \cite{LRW22}
estimate symmetry measures based on fidelity \cite{U76}, here we develop
algorithms that estimate asymmetry measures based on the Hilbert--Schmidt
distance. Since estimating fidelity is considered to be a difficult problem
for a quantum computer (more precisely, complete for a complexity class called
quantum statistical zero knowledge~\cite{W02}), while estimating the
Hilbert--Schmidt distance is considered easy for a quantum computer (more
precisely, complete for BQP \cite{RASW23}), it is expected that several of the symmetry testing
algorithms from \cite{LRW22} are difficult for a quantum computer while the symmetry
testing algorithms developed here are easy for a quantum computer to execute.

In our paper, we develop efficient symmetry testing algorithms for a number of
scenarios involving open quantum systems. Specifically, our contributions
consist of the following:

\begin{enumerate}
\item Given a state $\rho$ and a unitary representation $\left\{
U(g)\right\}  _{g\in G}$ of a group $G$, our first algorithm estimates the
following asymmetry measure:
\begin{equation}
\frac{1}{\left\vert G\right\vert }\sum_{g\in G}\left\Vert \left[
U(g),\rho\right]  \right\Vert _{2}^{2},
\label{eq:state-sym-meas}
\end{equation}
where
\begin{equation}
\label{eq:HS-norm}
\left\Vert A \right\Vert_2 \coloneqq \sqrt{\operatorname{Tr}[A^\dag A]}    
\end{equation}
is the Hilbert--Schmidt norm of an operator~$A$.
This measure is a faithful asymmetry measure, in the sense that it is equal to
zero if and only if $\left[  U(g),\rho\right]  =0$ for all $g\in G$, the
latter being the defining condition for symmetry of the state $\rho$ with
respect to the representation $\left\{  U(g)\right\}  _{g\in G}$~\cite{BRS07,GS08,marvian2012symmetry}.

\item Given a quantum channel $\mathcal{N}$ and a unitary channel
representation $\left\{  \mathcal{U}(g)\right\}  _{g\in G}$ of a group $G$, where $\mathcal{U}(g)(\cdot) \coloneqq U(g) (\cdot)U(g)^\dag$,
our next algorithm estimates the following asymmetry measure:
\begin{equation}
\frac{1}{\left\vert G\right\vert }\sum_{g\in G}\left\Vert \left(
\operatorname{id}\otimes\left[  \mathcal{U}(g),\mathcal{N}\right]  \right)
(\Phi^d)\right\Vert _{2}^{2},
\label{eq:channel-sym-meas}
\end{equation}
where $\operatorname{id}$ denotes the identity superoperator, $\left[  \mathcal{U}(g),\mathcal{N}\right]  $ represents the superoperator commutator
 (see, e.g., \cite[Section~II-C]{BRS07}), defined for superoperators $\mathcal{A}$ and $\mathcal{B}$ as
 \begin{equation}
     [\mathcal{A} , \mathcal{B}] \coloneqq  \mathcal{A}\circ \mathcal{B} - \mathcal{B}\circ \mathcal{A},
     \label{eq:superop-comm-def}
 \end{equation}
 and
\begin{equation}
\Phi^d \coloneqq \frac{1}{d} \sum_{i,j}|i\rangle\!\langle j|\otimes|i\rangle\!\langle j|
\label{eq:max-ent-state-def}
\end{equation}
is the standard maximally entangled state of Schmidt rank $d$. Thus,
\begin{multline}
\left(  \operatorname{id}\otimes\left[  \mathcal{U}(g),\mathcal{N}\right]
\right)  (\Phi^d)=\\
\left(  \operatorname{id}\otimes\left(  \mathcal{U}
(g)\circ\mathcal{N}\right)  \right)  (\Phi^d)-\left(  \operatorname{id}
\otimes\left(  \mathcal{N}\circ\mathcal{U}(g)\right)  \right)  (\Phi^d).
\end{multline}
As we show later on, the measure in \eqref{eq:channel-sym-meas} is a faithful
asymmetry measure, in the sense that it is equal to zero if and only if
\begin{equation}
\left[  \mathcal{U}(g),\mathcal{N}\right]  =0  \qquad \forall g\in G,
\label{eq:channel-symm-def-1}
\end{equation}
or, equivalently, if and only if
\begin{equation}  
\mathcal{U}(g)\circ\mathcal{N}=\mathcal{N}\circ\mathcal{U}(g) \qquad \forall g\in
G .
\label{eq:channel-symm-def-2}
\end{equation}
The latter is the defining condition for covariance symmetry of the channel~$\mathcal{N}$ with respect to the unitary channel representation $\{\mathcal{U}(g)\}_{g\in G}$ \cite{Holevo2002,marvian2012symmetry}.
In words, the
equality above means that the channel $\mathcal{N}$ commutes with every
unitary channel representation $\mathcal{U}(g)$ of a group element~$g \in G$.
Our algorithm for this task builds on an efficient subroutine for
estimating the Hilbert--Schmidt distance of the Choi states of two quantum
channels, which may be of independent interest for other purposes in quantum computing.

\item As a special case of the above, we consider testing covariance symmetry of measurement channels, which have the form $\rho \to \mathcal{M}(\rho) \coloneqq \sum_x \operatorname{Tr}[M_x \rho] |x\rangle\!\langle x|$, where $\{M_x\}_x $ is a positive operator-valued measure and $\{|x\rangle \}_x$ is an orthonormal basis that encodes the measurement outcome. Specifically, we provide an algorithm that estimates the following asymmetry measure:
\begin{equation}
    \frac{1}{|G|} \sum_{g \in G} \left \Vert \Phi^{\mathcal{M}
\circ\mathcal{U}(g)}-\Phi^{\mathcal{W}(g)\circ\mathcal{M}}\right \Vert_2^2,
\end{equation}
where $\{\mathcal{U}(g)\}_{g\in G}$ and $\{\mathcal{W}(g)\}_{g\in G}$ are unitary channel representations of a group $G$, with the latter realizing a shift of the measurement outcome as
\begin{equation}
\mathcal{W}(g)(|x\rangle\!\langle x|)= |\pi_g(x)\rangle\!\langle \pi_g(x)|,
\end{equation}
for $\pi_g$ a permutation. As discussed later on, this asymmetry measure is equal to zero if and only if the measurement is covariant \cite{D78,H11book}, i.e., such that $\mathcal{U}(g)(M_x)$ is an element of the POVM for all $g \in G$. Here again our algorithm builds on an efficient subroutine for estimating the Hilbert--Schmidt distance between two measurement channels, which we show is easier to perform than the aforementioned subroutine for general channels with quantum inputs and quantum outputs. We also believe that this subroutine should be of independent interest for other purposes in quantum computing.
\end{enumerate}

As a particular application of our algorithm for estimating
\eqref{eq:channel-sym-meas}, we investigate the symmetry of Lindbladian
evolutions, i.e., evolutions that correspond to the solution of the well known
Lindblad master equation \cite{Lindblad1976OnSemigroups}:
\begin{equation}
\frac{\partial\rho}{\partial t}=\mathcal{L}(\rho):=-i\left[  H,\rho\right]
+\sum_{k}L_{k}\rho L_{k}^{\dag}-\frac{1}{2}\{  L_{k}^{\dag}L_{k}
,\rho\}  ,\label{eq:lindblad-master-equation}
\end{equation}
where $H$ is a Hamiltonian, $\left\{  L_{k}\right\}  _{k}$ is a set of
Lindblad operators, and $\mathcal{L}$ is a superoperator known as the
Lindbladian. It is well known that the solution of
\eqref{eq:lindblad-master-equation} is the following quantum channel:
\begin{equation}
e^{\mathcal{L}t}(\rho)=\sum_{n=0}^{\infty}\frac{\mathcal{L}^{n}(\rho
) t^{n}}{n!},
\end{equation}
where $\mathcal{L}^{n}$ denotes $n$ repeated applications of the superoperator
$\mathcal{L}$. We accomplish symmetry testing of a Lindbladian $\mathcal{L}$ by employing our algorithm for estimating \eqref{eq:channel-sym-meas} with the substitution $\mathcal{N} = e^{\mathcal{L} t}$, and later on, we remark on how symmetry testing of the channel $e^{\mathcal{L} t}$ is equivalent to symmetry testing of the Lindbladian $\mathcal{L} $.

Similar to how understanding symmetries of Hamiltonians can be helpful for
deducing which physical transitions are allowed and which are not, the same can be
said for understanding symmetries of the more general Lindbladian evolutions.
As a particular example of this phenomenon, consider a Lindbladian in which
the Hamiltonian is the photon number operator \cite{GK04} and there is one Lindblad
operator, which is also the photon number operator. Then the only states that
are invariant under the resulting channel~$e^{\mathcal{L}t}$ are the photon
number states and mixtures thereof, because every other state becomes dephased
by this evolution. Thus, under these dynamics and for long times, it is not possible to transition from a probabilistic mixture of photon number states to a coherent superposition of them, the latter of which is resourceful for estimation tasks in quantum metrology \cite{Toth_2014}. More generally, our algorithm is helpful for understanding
symmetries of Lindbladian evolutions that are efficiently realizable on
quantum computers, by means of any of the several quantum algorithms that have
been proposed for simulating open systems dynamics \cite{Childs2016EfficientDynamics,Cleve2016EfficientEvolution,KSMM22,Schlimgen2022QuantumOperators,Suri2022Two-UnitarySimulation} (see \cite{Miessen2022QuantumDynamics} for a review).

Before proceeding with the content of our paper, we note here that the symmetry testing quantum algorithms proposed here, like those from \cite{LRW22,laborde2022hamiltonian}, are most useful in the regime in which the states, channels, Lindbladians, or measurements being tested, as well as the group representation unitaries being considered, involve a large number of qubits and are non-trivial. In this regime, it is likely not possible to simulate these tests efficiently by means of a classical computer, as shown in \cite{laborde2022hamiltonian,LRW23}, based on the conjecture that the complexity class BQP strictly contains the complexity class BPP (the latter being the class of problems that are efficiently implementable on a classical probabilistic computer). The previous statement, less formally, is equivalent to the widespread belief that quantum computers, in principle, are generally more powerful than classical computers. Furthermore, it is certainly of interest to employ quantum computers for the task of learning symmetries (see, e.g., \cite{lu2023learning}), and we consider the ability to test symmetries to be an important component of the learning process (either while the learning is occurring or after learning has completed, as a way of testing whether the learned symmetry is indeed correct).

In the rest of our paper, we provide details of our algorithms and evaluate their performance for some exemplary physical systems of interest. In particular, Section~\ref{sec:notation} reviews some basic notation and concepts used throughout the rest of our paper. Section~\ref{sec:symm-testing-algos} develops the theory behind our quantum algorithms for testing symmetry of states (Section~\ref{sec:symm-testing-states}), channels (Section~\ref{sec:symm-tests-channels-algos}), and Lindbladians (Section~\ref{sec:symm-tests-lindbladians-algos}). As part of our algorithm for testing symmetries of channels, we develop an efficient subroutine for estimating the Hilbert--Schmidt distance of the Choi states of two quantum channels (Section~\ref{sec:HS-Choi-states-algo}), which may be of independent interest for other purposes in quantum computing. Specifically, this algorithm significantly reduces the number of qubits needed for the estimation, when compared to a naive approach to this problem. In Section~\ref{sec:sims}, we test out our algorithms for estimating symmetries of Lindbladians for two example scenarios, using Qiskit's noiseless and noisy simulators \cite{Qiskit}. Section~\ref{sec:meas-ch} particularizes the development for quantum channels to the case of quantum measurement channels, proposing both a procedure for estimating the Hilbert--Schmidt distance of the Choi states of two such channels, as well as for estimating an asymmetry measure for a given measurement channel. Finally, in Section~\ref{sec:conclusion}, we conclude with a summary of our contributions, along with a discussion of prospects for implementing the developed algorithms on near-term quantum hardware.

\section{Notation and background}

\label{sec:notation}

This section provides some notation and background used throughout the rest of our paper. See \cite{hayashi_2017-book,wilde_2017,watrous_2018,holevo_2019,KW20} for further background on quantum information. A quantum state (density operator) is described by a positive semi-definite operator with unit trace. A quantum channel is a completely positive, trace-preserving superoperator. The Choi state $\Phi^{\mathcal{N}}$ of a channel~$\mathcal{N}$ is given by sending one share of a maximally entangled state $\Phi^d$, defined in~\eqref{eq:max-ent-state-def}, through the channel:
\begin{equation}
    \Phi^{\mathcal{N}} \coloneqq (\operatorname{id} \otimes \mathcal{N})(\Phi^d),
    \label{eq:choi-state-def}
\end{equation}
where we have assumed that the input space of $\mathcal{N}$ is $d$-dimensional.

\subsection{Hilbert--Schmidt distance}

The Hilbert--Schmidt distance between two states $\rho$ and $\sigma$, induced by the norm in \eqref{eq:HS-norm}, is given by $\left\Vert \rho - \sigma \right\Vert_2$. It is faithful, in the sense that $\left\Vert \rho - \sigma \right\Vert_2 = 0$ if and only if $\rho = \sigma$. It obeys the data-processing inequality for unital channels \cite{PWPR06}, but it does not obey it in general \cite{Ozawa2000}; that is, the following inequality holds whenever $\mathcal{N}$ is a unital channel (satisfying $\mathcal{N}(I) = I$, where~$I$ is the identity operator):
\begin{equation}
    \left\Vert \rho - \sigma \right\Vert_2 \geq \left\Vert \mathcal{N}(\rho) - \mathcal{N}(\sigma) \right\Vert_2.
\end{equation}

When $\rho$ and $\sigma$ are multi-qubit states and one can prepare many copies of them on a quantum computer, it is easy to estimate the square of their Hilbert--Schmidt distance by means of the destructive SWAP test (reviewed in Section~\ref{sec:dest-SWAP-test-review} below). This follows by considering the expansion
\begin{equation}
    \left\Vert \rho - \sigma \right\Vert_2^2 = \Tr[\rho^2] + \Tr[\sigma^2] - 2\Tr[\rho \sigma],
    \label{eq:HS-expansion}
\end{equation}
and the algorithm reviewed in the next section.
In fact, it is known that estimating the Hilbert--Schmidt distance of quantum states $\rho$ and $\sigma$ prepared by circuits is a BQP-complete problem \cite[Theorem~14]{RASW23}, so that this problem captures and is equivalent to the full power of quantum computation.

\subsection{Review of destructive SWAP test}

\label{sec:dest-SWAP-test-review}

Let us define the unitary swap
operator as
\begin{equation}
\operatorname{SWAP}\coloneqq \sum_{i,j}|i\rangle\!\langle j|\otimes|j\rangle\!\langle
i|,
\end{equation}
and note the following identity:
\begin{equation}
\operatorname{Tr}[CD]=\operatorname{Tr}[\operatorname{SWAP}(C\otimes
D)],
\label{eq:swap-prod-id}
\end{equation}
which holds for arbitrary linear operators $C$ and $D$ and plays a key role in our algorithms that follow. Recall that, if the SWAP operator acts on qubit systems, then
\begin{equation}
\operatorname{SWAP}    =\sum_{i,j\in\left\{  0,1\right\}  }\left(  -1\right)
^{ij}\Phi^{ij},
\label{eq:swap-bell}
\end{equation}
where
\begin{equation}
\Phi^{00}=\Phi^{+},\quad\Phi^{10}=\Phi^{-},\quad\Phi^{01}=\Psi^{+},\quad
\Phi^{11}=\Psi^{-}.
\end{equation}
In the above, $\Phi^{+}\equiv |\Phi^{+}\rangle\!\langle \Phi^{+}|$, $\Phi^{-}\equiv |\Phi^{-}\rangle\!\langle \Phi^{-}|$, $\Psi^{+}\equiv |\Psi^{+}\rangle\!\langle \Psi^{+}|$, and $\Psi^{-}\equiv |\Psi^{-}\rangle\!\langle \Psi^{-}|$ are the standard Bell states, defined through 
\begin{equation}
    |\Phi^{\pm}\rangle  \coloneqq \frac{1}{\sqrt{2}}\left(|00\rangle \pm |11\rangle\right), \qquad |\Psi^{\pm}\rangle  \coloneqq \frac{1}{\sqrt{2}}\left(|01\rangle \pm |10\rangle\right) .
    \label{eq:bell-basis}
\end{equation}
This means that the SWAP observable for qubits can be measured by means of a Bell measurement and classical post-processing, a fact that is used in the destructive SWAP test method for measuring the SWAP observable \cite{GC13} (see also \cite{Brun04,Suba2019} and Eqs.~(26)--(37) of \cite{RASW23} for a review of this method). 

For convenience, we briefly review the destructive SWAP\ test \cite{GC13} for estimating the
overlap term $\operatorname{Tr}[\rho\sigma]$, where $\rho$ and $\sigma$ are
$n$-qubit states, with~$\rho$ a state of qubits $1$, \ldots, $n$ and $\sigma$
a state of qubits $n+1$, \ldots, $2n$. The idea behind it is a consequence of
the following observation:
\begin{align}
\operatorname{Tr}[\rho\sigma] &  =\operatorname{Tr}[\text{SWAP}^{(n)}
(\rho\otimes\sigma)] \label{eq:dest-SWAP-key-1} \\
&  =\sum_{\vec{k},\vec{\ell}\in\left\{  0,1\right\}  ^{n}}\left(  -1\right)
^{\vec{k}\cdot\vec{\ell}}\operatorname{Tr}[\Phi^{\vec{k}\vec{\ell}}\left(
\rho\otimes\sigma\right)  ],
\label{eq:dest-SWAP-key-2}
\end{align}
where
\begin{align}
\vec{k} &  \equiv(k_{1},k_{2},\ldots,k_{n}),\qquad\vec{\ell}\equiv(\ell
_{1},\ell_{2},\ldots,\ell_{n}),\\
\Phi^{\vec{k}\vec{\ell}} &  \equiv\Phi_{1,n+1}^{k_{1}\ell_{1}}\otimes
\Phi_{2,n+2}^{k_{2}\ell_{2}}\otimes\cdots\otimes\Phi_{n,2n}^{k_{n}\ell_{n}
},\label{eq:ordering-state-dest-SWAP-test}
\end{align}
and we used the identity in \eqref{eq:swap-bell}, as well as the fact that
\begin{align}
\text{SWAP}^{(n)} &  =\text{SWAP}^{\otimes n} \label{eq:Swap-tensor-prod}\\
&  =\left(  \sum_{k_{1},\ell_{1}}\left(  -1\right)  ^{k_{1}\ell_{1}}
\Phi_{1,n+1}^{k_{1}\ell_{1}}\right)  \otimes\cdots\otimes\left(  \sum
_{k_{n},\ell_{n}}\left(  -1\right)  ^{k_{n}\ell_{n}}\Phi_{n,2n}^{k_{n}\ell
_{n}}\right)  \\
&  =\sum_{\vec{k},\vec{\ell}\in\left\{  0,1\right\}  ^{n}}\left(  -1\right)
^{\vec{k}\cdot\vec{\ell}}\Phi^{\vec{k}\vec{\ell}}.
\label{eq:Swap-tensor-prod-3}
\end{align}

By setting $Z\equiv(\vec{K},\vec{L})$ to be a multi-indexed random variable
taking the value $\left(  -1\right)  ^{\vec{k}\cdot\vec{\ell}}$ with
probability
\begin{equation}
p(\vec{k},\vec{\ell})\coloneqq\operatorname{Tr}[\Phi^{\vec{k}\vec{\ell}
}\left(  \rho\otimes\sigma\right)  ],
\end{equation}
we find from \eqref{eq:dest-SWAP-key-1}--\eqref{eq:dest-SWAP-key-2} that its expectation is given by
\begin{equation}
\mathbb{E}[Z]=\sum_{\vec{k},\vec{\ell}\in\left\{  0,1\right\}  ^{n}}\left(
-1\right)  ^{\vec{k}\cdot\vec{\ell}}\operatorname{Tr}[\Phi^{\vec{k}\vec{\ell}
}\left(  \rho\otimes\sigma\right)  ]=\operatorname{Tr}[\rho\sigma].
\end{equation}
This observation then leads to the following quantum algorithm (destructive SWAP test) for estimating
$\operatorname{Tr}[\rho\sigma]$, within additive error $\varepsilon$ and with
success probability at least $1-\delta$, where $\varepsilon > 0$ and $\delta\in(0,1)$.

\begin{algorithm}
\label{alg:HS-state-estimate} Given are quantum circuits to prepare the $n$-qubit states
$\rho$ and$~\sigma$.

\begin{enumerate}
\item Fix $\varepsilon>0$ and $\delta\in(0,1)$. Set $T\geq\frac{2}
{\varepsilon^{2}}\ln\!\left(  \frac{2}{\delta}\right)  $ and set $t=1$.

\item Prepare the states $\rho$ and $\sigma$ on $2n$ qubits (using the
ordering specified in~\eqref{eq:ordering-state-dest-SWAP-test}).

\item Perform the Bell measurement $\{\Phi^{\vec{k}\vec{\ell}}\}_{\vec{k}
\vec{\ell}}$\ on the $2n$ qubits, which leads to the measurement outcomes $\vec{k}$
and $\vec{\ell}$.

\item Set $Z_{t}=\left(  -1\right)  ^{\vec{k}\cdot\vec{\ell}}$.

\item Increment $t$.

\item Repeat Steps 2.-5.~until $t>T$ and then output $\overline{Z}
\coloneqq\frac{1}{T}\sum_{t=1}^{T}Z_{t}$ as an estimate of $\operatorname{Tr}
[\rho\sigma]$.
\end{enumerate}
\end{algorithm}

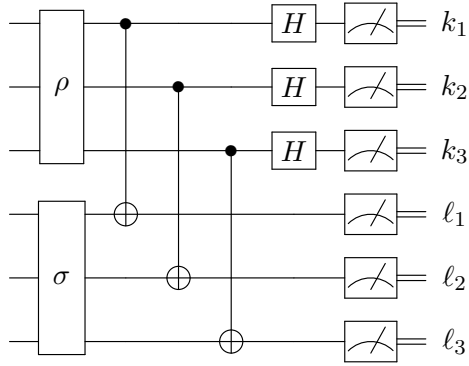
\begin{figure}
\centerline{\Qcircuit @C=1em @R=0.8em{
 & & \multigate{2}{\rho} & \ctrl{3} & \qw & \qw & \gate{H} & \meter & \cw & k_1 \\
  & &  \ghost{\rho} & \qw & \ctrl{3} & \qw & \gate{H} & \meter & \cw & k_2 \\
  & &  \ghost{\rho} & \qw & \qw & \ctrl{3} & \gate{H} & \meter & \cw & k_3 \\
 & & \multigate{2}{\sigma} & \targ & \qw & \qw & \qw & \meter & \cw & \ell_1 \\
 & &  \ghost{\sigma} & \qw & \targ & \qw & \qw & \meter & \cw & \ell_2 \\
 & & \ghost{\sigma} & \qw & \qw & \targ & \qw & \meter & \cw & \ell_3 
}}
\caption{\small Depiction of the core quantum subroutine given in Steps~2.-3.~of Algorithm~\ref{alg:HS-state-estimate}, for the three-qubit states $\rho$ and $\sigma$. This algorithm estimates the overlap~$\operatorname{Tr}[\rho\sigma]$.}
\label{fig:state-overlap-alg}
\end{figure}

Figure~\ref{fig:state-overlap-alg} depicts the core quantum subroutine of Algorithm~\ref{alg:HS-state-estimate}.
By the Hoeffding inequality (recalled as Theorem~\ref{thm:hoeffding} below), we are
guaranteed that the output of Algorithm~\ref{alg:HS-state-estimate} satisfies
\begin{equation}
\Pr\!\left[  \left\vert \overline{Z}-\operatorname{Tr}[\rho\sigma]\right\vert
\leq\varepsilon\right]  \geq1-\delta,
\end{equation}
due to the choice $T\geq\frac{2}{\varepsilon^{2}}\ln\!\left(  \frac{2}{\delta
}\right)  $.

Clearly, by the expansion in \eqref{eq:HS-expansion} and repeating Algorithm~\ref{alg:HS-state-estimate} three times, one can use $O\!\left(\frac{1}{\varepsilon^2} \ln\! \left(\frac{1}{\delta}\right)\right)$ samples of $\rho$ and $\sigma$ in order to obtain an estimate of~\eqref{eq:HS-expansion} within additive error $\varepsilon > 0$ and with success probability not smaller than $1-\delta$, where $\delta\in(0,1)$.

\begin{theorem}[Hoeffding Inequality \cite{H63}]
\label{thm:hoeffding}
Suppose that we are given $T$~independent samples $Y_1, \ldots, Y_T$ of a bounded random variable $Y$ taking values in the interval $[a,b]$ and having mean $\mu$. Set 
$
    \overline{Y_T} \coloneqq \frac{1}{T} (Y_1 + \cdots +Y_T)
$
to be the sample mean. Let $\varepsilon > 0$ be the desired accuracy, and let $1-\delta$ be the desired success probability, where $\delta \in (0,1)$. Then
\begin{equation}
\label{eq:Hoeff-bound}
\Pr[\vert \overline{Y_T} - \mu \vert \leq \varepsilon] \geq 1-\delta,
\end{equation}
as long as 
$
    T \geq \frac{M^2}{2\varepsilon^2} \ln \!\left( \frac{2}{\delta}\right),
$
where $M \coloneqq b -  a$.
\end{theorem}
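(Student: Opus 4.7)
The plan is to prove the inequality by the standard Chernoff--Cram\'er exponential moment method and then convert the resulting exponential tail bound into a sample-size requirement. First I would center the random variables by setting $X_i \coloneqq Y_i - \mu$, so that each $X_i$ has mean zero, takes values in an interval of length $M = b-a$, and the deviation event becomes $\{\sum_i X_i \geq T\varepsilon\}$. Then for any $s > 0$, Markov's inequality applied to the nonnegative random variable $\exp(s\sum_i X_i)$ gives
\begin{equation}
\Pr\!\left[\overline{Y_T} - \mu \geq \varepsilon\right] = \Pr\!\left[e^{s \sum_i X_i} \geq e^{sT\varepsilon}\right] \leq e^{-sT\varepsilon}\,\mathbb{E}\!\left[e^{s \sum_i X_i}\right],
\end{equation}
and independence factors the right-hand expectation into $\prod_{i=1}^T \mathbb{E}[e^{sX_i}]$.

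The main technical step, and the one I expect to be the key obstacle, is Hoeffding's lemma: if $X$ is any zero-mean random variable with $X \in [a-\mu, b-\mu]$, then $\mathbb{E}[e^{sX}] \leq \exp(s^2 M^2/8)$. I would prove this by writing $X$ as a convex combination of the interval's endpoints via $X = \lambda (b-\mu) + (1-\lambda)(a-\mu)$ with $\lambda = (X - (a-\mu))/M$, applying convexity of the exponential to bound $e^{sX}$ above by the corresponding convex combination of endpoint exponentials, taking expectation (which eliminates $X$ because $\mathbb{E}[X]=0$), and then showing that the resulting function of $s$ is bounded by $\exp(s^2 M^2 / 8)$ by a standard Taylor remainder argument on its logarithm (whose second derivative in $s$ is at most $M^2/4$).

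Substituting Hoeffding's lemma into the Chernoff bound yields
\begin{equation}
\Pr\!\left[\overline{Y_T} - \mu \geq \varepsilon\right] \leq \exp\!\left(-sT\varepsilon + \tfrac{s^2 T M^2}{8}\right),
\end{equation}
which is minimized at $s = 4\varepsilon/M^2$ to give the one-sided bound $\exp(-2T\varepsilon^2/M^2)$. The same argument applied to $-X_i$ gives the matching lower-tail bound, and a union bound over the two tails produces
\begin{equation}
\Pr\!\left[\left\vert\overline{Y_T} - \mu\right\vert \geq \varepsilon\right] \leq 2\exp\!\left(-\frac{2T\varepsilon^2}{M^2}\right).
\end{equation}

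Finally, to obtain the stated sample-size condition, I would require that the right-hand side be at most $\delta$; rearranging $2\exp(-2T\varepsilon^2/M^2) \leq \delta$ gives exactly $T \geq \frac{M^2}{2\varepsilon^2}\ln(2/\delta)$, which combined with the complementary event establishes \eqref{eq:Hoeff-bound}. Apart from Hoeffding's lemma itself, every step is routine; the rest is bookkeeping to ensure the two-tail union bound matches the factor of $2$ inside the logarithm.
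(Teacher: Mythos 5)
Your derivation is correct: the Chernoff--Cram\'er bound combined with Hoeffding's lemma gives the one-sided tail $\exp(-2T\varepsilon^2/M^2)$, the union bound doubles it, and solving $2\exp(-2T\varepsilon^2/M^2)\leq\delta$ yields precisely $T\geq\frac{M^2}{2\varepsilon^2}\ln(2/\delta)$. The paper does not prove this statement at all---it is recalled as a classical result and attributed to the cited reference---and your argument is exactly the standard proof one would find there, so there is nothing to compare beyond noting that your route is the canonical one.
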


\section{Quantum algorithms for testing symmetries}

\label{sec:symm-testing-algos}

\subsection{Testing symmetries of states}

\label{sec:symm-testing-states}

Let us now introduce a simple quantum algorithm for testing symmetry of
the state $\rho$ with respect to the unitary representation $\left\{
U(g)\right\}  _{g\in G}$ of a group~$G$. Specifically, the goal is to estimate
the normalized commutator norm in~\eqref{eq:state-sym-meas}. As discussed around \eqref{eq:state-sym-meas}, this
asymmetry measure is equal to zero if and only if $\left[  U(g),\rho\right]
=0$ for all $g\in G$. To start off, we establish the following lemma,
which provides a direct link between the asymmetry measure in \eqref{eq:state-sym-meas}, and an
approach we can use for estimating it on a quantum computer.

\begin{lemma}
    Given a state $\rho$ and a unitary representation $\{U(g)\}_{g\in G}$ of a group $G$, the following equality holds:
    \begin{equation}
        \frac{1}{\vert G \vert} \sum\limits_{g \in G} \left\Vert [U(g), \rho] \right\Vert^2_2 = 2 \left( \Tr[\rho^2] - \Tr[\rho  \mathcal{T}_G(\rho)] \right),
        \label{eq:HS-commutator-norm-expand}
    \end{equation}
    where $\mathcal{T}_G$ is the twirl channel given by 
\begin{equation}
    \mathcal{T}_G( \cdot ) \coloneqq  \frac{1}{\vert G \vert } \sum\limits_{g \in G} U(g) (\cdot) U(g)^\dagger.
\end{equation}
\end{lemma}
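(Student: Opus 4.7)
The plan is to prove the equality by a direct expansion of the Hilbert--Schmidt norm squared of the commutator, followed by averaging over the group. The computation is essentially a routine one, so my focus will be on laying out the algebraic steps cleanly and identifying where the twirl channel emerges.

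First I would expand a single term in the sum. Writing $\|[U(g),\rho]\|_2^2 = \Tr\bigl[(U(g)\rho - \rho U(g))^\dagger (U(g)\rho - \rho U(g))\bigr]$ and using $\rho^\dagger = \rho$ together with unitarity $U(g)^\dagger U(g) = I$, the four resulting cross terms collapse. Two of them equal $\Tr[\rho^2]$ (via $U(g)^\dagger U(g) = I$ and cyclicity of the trace), while the remaining two both equal $\Tr[\rho\,U(g)^\dagger \rho\, U(g)]$ after applying cyclicity. This gives the single-group-element identity
\begin{equation}
\|[U(g),\rho]\|_2^2 \;=\; 2\Tr[\rho^2] \;-\; 2\Tr\!\bigl[\rho\, U(g)^\dagger \rho\, U(g)\bigr].
\end{equation}

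Next I would average over $g \in G$, pulling the sum inside the trace by linearity:
\begin{equation}
\frac{1}{|G|}\sum_{g\in G}\|[U(g),\rho]\|_2^2 \;=\; 2\Tr[\rho^2] \;-\; 2\,\Tr\!\left[\rho \cdot \frac{1}{|G|}\sum_{g\in G} U(g)^\dagger \rho\, U(g)\right].
\end{equation}
The key observation is then that the operator in brackets coincides with $\mathcal{T}_G(\rho)$. Indeed, since $\{U(g)\}_{g\in G}$ is a unitary representation, $U(g)^\dagger = U(g^{-1})$, and reindexing the sum by $g \mapsto g^{-1}$ (which is a bijection of $G$) yields
\begin{equation}
\frac{1}{|G|}\sum_{g\in G} U(g)^\dagger \rho\, U(g) \;=\; \frac{1}{|G|}\sum_{g\in G} U(g)\, \rho\, U(g)^\dagger \;=\; \mathcal{T}_G(\rho).
\end{equation}
Substituting this identification gives the claimed expression $2\bigl(\Tr[\rho^2] - \Tr[\rho\,\mathcal{T}_G(\rho)]\bigr)$.

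There is no real obstacle here; the only step deserving explicit mention is the reindexing via $g \mapsto g^{-1}$, which is what ensures the adjoint twirl $\tfrac{1}{|G|}\sum_g U(g)^\dagger(\cdot)U(g)$ agrees with $\mathcal{T}_G$ itself. Everything else is a direct expansion using Hermiticity of $\rho$, unitarity of $U(g)$, and cyclicity of the trace.
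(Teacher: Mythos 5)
Your proof is correct and follows essentially the same route as the paper's: a direct expansion of the squared commutator norm into $2\Tr[\rho^2] - 2\Tr[\rho\, U(g)^\dagger \rho\, U(g)]$ followed by averaging over the group. The paper arrives at the same single-element identity by first rewriting $\left\Vert [U(g),\rho]\right\Vert_2$ as $\left\Vert \rho - U(g)\rho U(g)^\dagger\right\Vert_2$ via unitary invariance and then applying the Hilbert--Schmidt expansion, which lands on $\Tr[\rho\, U(g)\rho\, U(g)^\dagger]$ directly so that no reindexing is needed; your $g\mapsto g^{-1}$ reindexing (equivalently, one more application of cyclicity of the trace) closes that purely cosmetic gap correctly.
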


\begin{proof}
Consider the following equalities:
\begin{align}
    \label{eq:HS-Sym-Simpl}
    \left\Vert [U(g), \rho] \right\Vert^2_2 
    &= \left\Vert \rho U(g) - U(g)\rho \right\Vert^2_2  \\
    &= \left\Vert \rho - U(g) \rho U(g)^\dagger \right\Vert^2_2  \\
    &= \Tr[\rho^2] + \Tr[(U(g) \rho U(g)^\dagger)^2] - 2\Tr[\rho U(g) \rho U(g)^\dagger] \\
    &= 2\left( \Tr[\rho^2] - \Tr[\rho U(g) \rho U(g)^\dagger] \right),
\end{align}
where the second equality is due to the unitary invariance of the Hilbert--Schmidt norm, the third from the expansion in \eqref{eq:HS-expansion}, and the final one from cyclicity of trace. Thus, we see that
\begin{align}
    \frac{1}{\vert G \vert} \sum\limits_{g \in G} \left\Vert [U(g), \rho] \right\Vert^2_2 
    &= \frac{1}{\vert G \vert} \sum\limits_{g \in G} 2 \left( \Tr[\rho^2] - \Tr[\rho U(g) \rho U(g)^\dagger] \right)  \\
    &= 2 \left( \Tr[\rho^2] - \Tr[\rho \mathcal{T}_G(\rho)] \right)  ,
\end{align} 
concluding the proof.
\end{proof}

\medskip
Now suppose that the state $\rho$ is an $n$-qubit state and efficiently
preparable on a quantum computer, either by a quantum circuit or other means, and that,
for all $g\in G$,
there exists a circuit that efficiently realizes the $n$-qubit unitary~$U(g)$. Then the idea for estimating the asymmetry measure in~\eqref{eq:state-sym-meas} is simple: Perform the destructive SWAP test (Algorithm~\ref{alg:HS-state-estimate}) to estimate $\Tr[\rho^2]$ and perform the same test, using instead $\rho$ and its twirled version $\mathcal{T}_G(\rho)$, to estimate $\Tr[\rho \mathcal{T}_G(\rho)]$. When estimating the latter term, we modify Algorithm~\ref{alg:HS-state-estimate} to be as follows:

\begin{algorithm}
\label{alg:twirled-state-estimate} Given is a quantum circuit to prepare the $n$-qubit state
$\rho$ and circuits to generate the unitaries in the  representation $\{U(g)\}_{g\in G}$.

\begin{enumerate}
\item Fix $\varepsilon>0$ and $\delta\in(0,1)$. Set $T\geq\frac{2}
{\varepsilon^{2}}\ln\!\left(  \frac{2}{\delta}\right)  $ and set $t=1$.

\item Pick $g \in G$ uniformly at random. Prepare the states $\rho$ and $U(g) \rho U(g)^\dag$ on $2n$ qubits (using the
ordering specified in~\eqref{eq:ordering-state-dest-SWAP-test}).

\item Perform the Bell measurement $\{\Phi^{\vec{k}\vec{\ell}}\}_{\vec{k}
\vec{\ell}}$\ on the $2n$ qubits, which leads to the measurement outcomes $\vec{k}$
and $\vec{\ell}$.

\item Set $Z_{t}=\left(  -1\right)  ^{\vec{k}\cdot\vec{\ell}}$.

\item Increment $t$.

\item Repeat Steps 2.-5.~until $t>T$ and then output $\overline{Z}
\coloneqq\frac{1}{T}\sum_{t=1}^{T}Z_{t}$ as an estimate of $\Tr[\rho \mathcal{T}_G(\rho)]$.
\end{enumerate}
\end{algorithm}

Thus, by combining the estimates of $\Tr[\rho^2]$ and $\Tr[\rho \mathcal{T}_G(\rho)]$ according to~\eqref{eq:HS-commutator-norm-expand}, it follows that this approach uses $O\!\left(\frac{1}{\varepsilon^2} \ln\! \left(\frac{1}{\delta}\right)\right)$ samples of $\rho$ in order to obtain an estimate of the asymmetry measure in \eqref{eq:state-sym-meas} within additive error $\varepsilon >0$ and with success probability not smaller than $1-\delta$, where $\delta\in(0,1)$.

\subsection{Estimating the Hilbert--Schmidt distance of the Choi states of channels}

\label{sec:HS-Choi-states-algo}

Let us now introduce a method for
estimating the Hilbert--Schmidt distance between the Choi states of two quantum channels, as a generalization of the destructive SWAP test used for estimating the Hilbert--Schmidt distance between two states. This algorithm
has applications beyond symmetry testing, for example, in quantum channel compilation
as a generalization of compiling states (see~\cite{Ezzell_2023} for the latter).

To begin with, recall that two channels $\mathcal{N}$ and
$\mathcal{M}$ are equal if and only if their Choi states are equal \cite[Section~4.4.2]{wilde_2017}; i.e.,
\begin{equation}
\mathcal{N}=\mathcal{M}\qquad\Leftrightarrow\qquad\Phi^{\mathcal{N}}
=\Phi^{\mathcal{M}},
\end{equation}
where the Choi states $\Phi^{\mathcal{N}}$ and $
\Phi^{\mathcal{M}}$ are defined in \eqref{eq:choi-state-def}.
One way to determine whether the equality above holds approximately is to employ
the Hilbert--Schmidt distance of the Choi states:
\begin{equation}
\left\Vert \Phi^{\mathcal{N}}-\Phi^{\mathcal{M}}\right\Vert _{2},
\end{equation}
where the Hilbert--Schmidt norm is defined in \eqref{eq:HS-norm}.
This is due to the positive definiteness or faithfulness of the norm, i.e.,
\begin{equation}
\left\Vert \Phi^{\mathcal{N}}-\Phi^{\mathcal{M}}\right\Vert
_{2}=0\qquad\Leftrightarrow\qquad\Phi^{\mathcal{N}}=\Phi^{\mathcal{M}}.
\end{equation}
Using the expansion in \eqref{eq:HS-expansion}, consider that
\begin{equation}
\left\Vert \Phi^{\mathcal{N}}-\Phi^{\mathcal{M}}\right\Vert _{2}
^{2}=\operatorname{Tr}[(\Phi^{\mathcal{N}})^{2}]+\operatorname{Tr}
[(\Phi^{\mathcal{M}})^{2}]-2\operatorname{Tr}[\Phi^{\mathcal{N}}
\Phi^{\mathcal{M}}]. \label{eq:HS-expand}
\end{equation}
The following lemma gives a way of rewriting the overlap $
\operatorname{Tr}[\Phi^{\mathcal{N}}\Phi^{\mathcal{M}}]$ in terms of the SWAP observable, and it is critical to our simplified approach for estimating the Hilbert--Schmidt distance between the Choi states of two channels.

\begin{lemma}
\label{lem:key-identity-SWAP}
Let $\mathcal{N}$ and $\mathcal{M}$ be channels
with Choi states $\Phi^{\mathcal{N}}$ and $\Phi^{\mathcal{M}}$,
respectively, and $d$-dimensional inputs. Then
\begin{equation}
\operatorname{Tr}[\Phi^{\mathcal{N}}\Phi^{\mathcal{M}}]=\frac
{1}{d^{2}}\operatorname{Tr}[\operatorname{SWAP}(\mathcal{N\otimes
M})(\operatorname{SWAP})].
\end{equation}

\end{lemma}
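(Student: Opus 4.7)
The plan is to expand both sides explicitly in a computational basis and match them via the SWAP product identity \eqref{eq:swap-prod-id}. I would start by writing out the two Choi states from their definition in \eqref{eq:choi-state-def}:
\begin{equation}
\Phi^{\mathcal{N}} = \frac{1}{d}\sum_{i,j} |i\rangle\!\langle j| \otimes \mathcal{N}(|i\rangle\!\langle j|), \qquad \Phi^{\mathcal{M}} = \frac{1}{d}\sum_{k,\ell} |k\rangle\!\langle \ell| \otimes \mathcal{M}(|k\rangle\!\langle \ell|),
\end{equation}
and then compute the overlap $\operatorname{Tr}[\Phi^{\mathcal{N}}\Phi^{\mathcal{M}}]$ by factoring the trace over the two tensor factors. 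The trace on the reference system produces $\operatorname{Tr}[|i\rangle\!\langle j| k\rangle\!\langle \ell|] = \delta_{jk}\delta_{i\ell}$, which collapses the quadruple sum into a double sum of the form $\frac{1}{d^2}\sum_{i,j}\operatorname{Tr}[\mathcal{N}(|i\rangle\!\langle j|)\,\mathcal{M}(|j\rangle\!\langle i|)]$.

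Next, I would apply the key identity \eqref{eq:swap-prod-id} inside the sum to rewrite each term as $\operatorname{Tr}[\operatorname{SWAP}(\mathcal{N}(|i\rangle\!\langle j|)\otimes \mathcal{M}(|j\rangle\!\langle i|))]$. Pulling the sum back through the superoperators $\mathcal{N}\otimes\mathcal{M}$ and the trace against the fixed SWAP observable, this becomes
\begin{equation}
\frac{1}{d^2}\operatorname{Tr}\!\left[\operatorname{SWAP}\,(\mathcal{N}\otimes\mathcal{M})\!\left(\sum_{i,j}|i\rangle\!\langle j|\otimes |j\rangle\!\langle i|\right)\right].
\end{equation}
The final step is to recognize the bracketed operator as exactly the SWAP operator itself, by the defining expansion $\operatorname{SWAP}=\sum_{i,j}|i\rangle\!\langle j|\otimes |j\rangle\!\langle i|$, which delivers the claimed formula.

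There is no real obstacle here; the proof is essentially bookkeeping, and the only subtlety is making sure the index pairing produced by the reference-system trace is $(i\leftrightarrow \ell,\, j\leftrightarrow k)$ so that the resulting double sum reassembles into $\operatorname{SWAP}$ rather than into the identity or the unnormalized maximally entangled vector. A brief remark could also be included observing that this identity is the channel analogue of \eqref{eq:dest-SWAP-key-1} and is what will allow the destructive SWAP test to be upgraded from states to Choi states in the subsequent algorithm.
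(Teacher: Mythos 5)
Your proposal is correct and follows essentially the same route as the paper's proof: expand the Choi-state overlap, use the reference-system trace to collapse to $\frac{1}{d^2}\sum_{i,j}\operatorname{Tr}[\mathcal{N}(|i\rangle\!\langle j|)\mathcal{M}(|j\rangle\!\langle i|)]$, apply the identity \eqref{eq:swap-prod-id}, and reassemble $\sum_{i,j}|i\rangle\!\langle j|\otimes|j\rangle\!\langle i|$ into $\operatorname{SWAP}$. The index bookkeeping you flag as the only subtlety is handled correctly.
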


\begin{proof}
Consider that
\begin{align}
\operatorname{Tr}[\Phi^{\mathcal{N}}\Phi^{\mathcal{M}}]  &
=\operatorname{Tr}[(\operatorname{id}\otimes\mathcal{N})(\Phi^d)(\operatorname{id}\otimes\mathcal{M})(\Phi^d)] \label{eq:Choi-overlap-pf-1} \\
&  =\frac{1}{d^{2}}\sum_{i,j,k,\ell}\operatorname{Tr}[\left(  |i\rangle\!\langle
j|\otimes\mathcal{N}(|i\rangle\!\langle j|)\right)  \left(  |k\rangle
\!\langle\ell|\otimes\mathcal{M}(|k\rangle\!\langle\ell|)\right)  ]\\
&  =\frac{1}{d^{2}}\sum_{i,j,k,\ell}\langle\ell|i\rangle\!\langle j|k\rangle
\otimes\operatorname{Tr}[\mathcal{N}(|i\rangle\!\langle j|)\mathcal{M}
(|k\rangle\!\langle\ell|)]\\
&  =\frac{1}{d^{2}}\sum_{i,j}\operatorname{Tr}[\mathcal{N}(|i\rangle\!\langle
j|)\mathcal{M}(|j\rangle\!\langle i|)]
\label{eq:Choi-overlap-pf-next} 
\\
&  =\frac{1}{d^{2}}\sum_{i,j}\operatorname{Tr}[\operatorname{SWAP}\left(
\mathcal{N}\otimes\mathcal{M}\right)  (|i\rangle\!\langle j|\otimes
|j\rangle\!\langle i|)]\\
&  =\frac{1}{d^{2}}\operatorname{Tr}[\operatorname{SWAP}\left(  \mathcal{N}
\otimes\mathcal{M}\right)  \left(  \operatorname{SWAP}\right)  ].
\end{align}
The penultimate equality follows from \eqref{eq:swap-prod-id}.
\end{proof}

\medskip

Now suppose that the channels $\mathcal{N}$ and $\mathcal{M}$ each accept $n$
qubits as input and output $m$ qubits. Then each of the terms in
\eqref{eq:HS-expand}\ can be efficiently measured on a quantum computer. For
example, to measure the last term $\operatorname{Tr}[\Phi^{\mathcal{N}}\Phi^{\mathcal{M}}]$, one could prepare the tensor-product state $\Phi^{\mathcal{N}}\otimes\Phi^{\mathcal{M}}$ and then perform a
destructive SWAP test, as recalled in Algorithm~\ref{alg:HS-state-estimate}. This approach, which we consider to be a naive approach in light of Algorithm~\ref{alg:HS-estimate} below, requires $2(n+m)$ qubits in total, for a circuit width of $2(n+m)$ qubits.
However, what follows as a consequence of Lemma~\ref{lem:key-identity-SWAP} is that there is a simpler procedure for
estimating $\operatorname{Tr}[\Phi^{\mathcal{N}}\Phi^{\mathcal{M}}]$, which requires preparing only $2n$ qubits at the input and acting on
$2m$ qubits at the output, and thus for a circuit width of $\max\{2n,2m\}$ qubits. 

Indeed, Lemma~\ref{lem:key-identity-SWAP} establishes that
\begin{equation}
\operatorname{Tr}[\Phi^{\mathcal{N}}\Phi^{\mathcal{M}}]=\frac
{1}{2^{2n}}\operatorname{Tr}[\operatorname{SWAP}^{(m)}\left(  \mathcal{N}
\otimes\mathcal{M}\right)  (\operatorname{SWAP}^{(n)})],
\label{eq:HS-overlap-qubits}
\end{equation}
where the superscript notation explicitly indicates the number of qubits on
which the swap operator acts. Next recall \eqref{eq:Swap-tensor-prod}--\eqref{eq:Swap-tensor-prod-3}, 
which implies that
\begin{multline}
\frac{1}{2^{2n}}\operatorname{Tr}[\operatorname{SWAP}^{(m)}\left(
\mathcal{N}\otimes\mathcal{M}\right)  (\operatorname{SWAP}^{(n)})]\\
=\frac{1}{2^{2n}}\sum_{\vec{i},\vec{j}\in\left\{  0,1\right\}  ^{m}}\sum
_{\vec{k},\vec{\ell}\in\left\{  0,1\right\}  ^{n}}\left(  -1\right)  ^{\vec
{i}\cdot\vec{j}+\vec{k}\cdot\vec{\ell}}\operatorname{Tr}[\Phi^{\vec{i}\vec{j}
}\left(  \mathcal{N}\otimes\mathcal{M}\right)  (\Phi^{\vec{k}\vec{\ell}})],
\label{eq:SWAP-to-bell-reduction}
\end{multline}
where
\begin{align}
\vec{i}  &  \equiv(i_{1},i_{2},\ldots,i_{m}),  \qquad
\vec{j}    \equiv(j_{1},j_{2},\ldots,j_{m}),\\
\vec{k}  &  \equiv(k_{1},k_{2},\ldots,k_{n}), \qquad 
\vec{\ell}  \equiv(\ell_{1},\ell_{2},\ldots,\ell_{n}),\\
\Phi^{\vec{i}\vec{j}}  &  \equiv\Phi_{1,m+1}^{i_{1}j_{1}}\otimes\Phi
_{2,m+2}^{i_{2}j_{2}}\otimes\cdots\otimes\Phi_{m,2m}^{i_{m}j_{m}},\\
\Phi^{\vec{k}\vec{\ell}}  &  \equiv\Phi_{1,n+1}^{k_{1}\ell_{1}}\otimes
\Phi_{2,n+2}^{k_{2}\ell_{2}}\otimes\cdots\otimes\Phi_{n,2n}^{k_{n}\ell_{n}} .
\label{eq:input-bell-state}
\end{align}

Eq.~\eqref{eq:SWAP-to-bell-reduction} and Lemma~\ref{lem:key-identity-SWAP} are the key insights that lead to a simplified quantum algorithm for estimating the term $ \operatorname{Tr}[\Phi^{\mathcal{N}}\Phi^{\mathcal{M}}]$, which requires only $2n$ qubits at the input and $2m$ qubits at the output.
In the above, we have implicitly used the following ordering: the channel
$\mathcal{N}$ acts on input qubits $1,\ldots,n$ and produces output qubits
$1,\ldots,m$, the channel $\mathcal{M}$ acts on input qubits $n+1,\ldots,2n$
and produces output qubits $m+1,\ldots,2m$, and the qubits for the Bell states
are labeled as subscripts above. By setting $Y\equiv(\vec{I},\vec{J},\vec
{K},\vec{L})$ to be a multi-indexed random variable taking the value~$\left(
-1\right)  ^{\vec{i}\cdot\vec{j}+\vec{k}\cdot\vec{\ell}}$ with probability
\begin{equation}
p(\vec{k},\vec{\ell},\vec{i},\vec{j})    =p(\vec{i},\vec{j}|\vec{k},\vec
{\ell})\, p(\vec{k},\vec{\ell}),
\end{equation}
where
\begin{align}
p(\vec{k},\vec{\ell})    & \coloneqq \frac{1}{2^{2n}},\\
p(\vec{i},\vec{j}|\vec{k},\vec{\ell})  &  \coloneqq \operatorname{Tr}
[\Phi^{\vec{i}\vec{j}}\left(  \mathcal{N}\otimes\mathcal{M}\right)
(\Phi^{\vec{k}\vec{\ell}})],
\end{align}
we find from \eqref{eq:HS-overlap-qubits}--\eqref{eq:SWAP-to-bell-reduction} that its expectation is given by
\begin{equation}
\mathbb{E}[Y]=\frac{1}{2^{2n}}\operatorname{Tr}[\operatorname{SWAP}
^{(m)}\left(  \mathcal{N}\otimes\mathcal{M}\right)  (\operatorname{SWAP}
^{(n)})] = \operatorname{Tr}[\Phi^{\mathcal{N}}\Phi^{\mathcal{M}}].
\label{eq:expect-Y-Choi-overlap}
\end{equation}

The observation in \eqref{eq:expect-Y-Choi-overlap} then leads to the following quantum algorithm for estimating
$\operatorname{Tr}[\Phi^{\mathcal{N}}\Phi^{\mathcal{M}}]$, within additive error $\varepsilon$ and with success probability not smaller than $1-\delta$, where $\varepsilon>0$ and $\delta \in (0,1)$.

\begin{algorithm}
\label{alg:HS-estimate} Given are quantum circuits to implement the channels
$\mathcal{N}$ and~$\mathcal{M}$.

\begin{enumerate}
\item Fix $\varepsilon>0$ and $\delta\in(0,1)$. Set $T\geq\frac{2}{\varepsilon^{2}}\ln\!\left(  \frac{2}{\delta}\right)  $ and set $t=1$.

\item Generate the bit vectors $\vec{k}$ and $\vec{\ell}$ uniformly at random.

\item Prepare the Bell state $\Phi^{\vec{k}\vec{\ell}}$ on $2n$ qubits (using
the ordering specified in~\eqref{eq:input-bell-state}).

\item Apply the tensor-product channel $\mathcal{N}\otimes\mathcal{M}$ (using
the ordering specified after \eqref{eq:input-bell-state}).

\item Perform the Bell measurement $\{\Phi^{\vec{i}\vec{j}}\}_{\vec{i},\vec
{j}}$\ on the $2m$ output qubits, which leads to the measurement outcomes $\vec{i}$
and $\vec{j}$.

\item Set $Y_{t}=\left(  -1\right)  ^{\vec{i}\cdot\vec{j}+\vec{k}\cdot
\vec{\ell}}$.

\item Increment $t$.

\item Repeat Steps 2.-7.~until $t>T$ and then output $\overline{Y}
\coloneqq \frac{1}{T}\sum_{t=1}^{T}Y_{t}$ as an estimate of $\operatorname{Tr}
[\Phi^{\mathcal{N}}\Phi^{\mathcal{M}}]$.
\end{enumerate}
\end{algorithm}

Figure~\ref{fig:Choi-overlap-alg} depicts the core quantum subroutine of Algorithm~\ref{alg:HS-estimate}. By the Hoeffding inequality (recalled as Theorem~\ref{thm:hoeffding}), we are guaranteed that the output of
Algorithm~\ref{alg:HS-estimate} satisfies
\begin{equation}
\Pr\!\left[\left\vert \overline{Y}-\operatorname{Tr}[\Phi^{\mathcal{N}}\Phi^{\mathcal{M}}]\right\vert \leq\varepsilon\right]\geq 1-\delta,
\end{equation}
due to the choice $T\geq\frac{2}{\varepsilon^{2}}\ln\!\left(  \frac{2}{\delta}\right)  $.

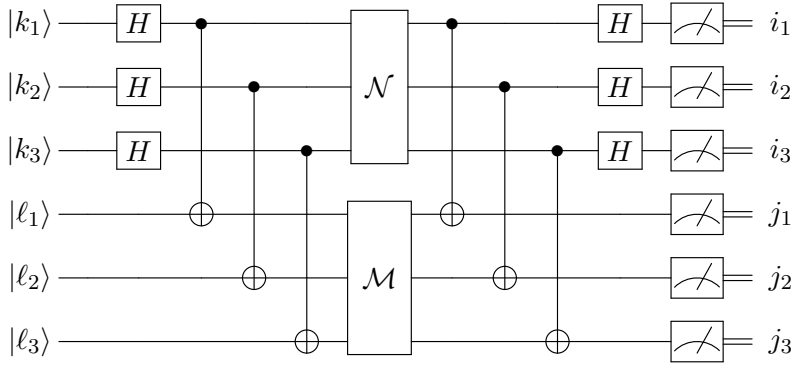
\begin{figure}
\centerline{
\Qcircuit @C=1em @R=0.8em{
|k_1\rangle & & \qw & \gate{H} & \ctrl{3} & \qw & \qw & \multigate{2}{\mathcal{N}} & \ctrl{3} & \qw & \qw & \gate{H}  & \meter & \cw & i_1 \\
|k_2\rangle  & & \qw & \gate{H} & \qw & \ctrl{3} & \qw &  \ghost{\mathcal{N}} & \qw & \ctrl{3} & \qw & \gate{H} & \meter & \cw & i_2 \\
|k_3\rangle  & & \qw & \gate{H} & \qw & \qw & \ctrl{3} &  \ghost{\mathcal{N}} & \qw & \qw & \ctrl{3} & \gate{H} & \meter & \cw & i_3 \\
|\ell_1 \rangle & & \qw  & \qw & \targ & \qw  & \qw & \multigate{2}{\mathcal{M}} & \targ & \qw & \qw & \qw & \meter & \cw & j_1 \\
|\ell_2 \rangle & & \qw & \qw & \qw & \targ  & \qw & \ghost{\mathcal{M}} & \qw & \targ & \qw & \qw & \meter & \cw & j_2 \\
|\ell_3 \rangle & & \qw & \qw & \qw & \qw & \targ  & \ghost{\mathcal{M}} & \qw & \qw & \targ & \qw & \meter & \cw & j_3 
}
}
\caption{\small Depiction of the core quantum subroutine given in Steps~2.-5.~of Algorithm~\ref{alg:HS-estimate}, such that the quantum channels $\mathcal{N}$ and $\mathcal{M}$ have three-qubit inputs and outputs. This algorithm estimates the overlap $\operatorname{Tr}[\Phi^{\mathcal{N}}\Phi^{\mathcal{M}}]$ of the Choi states of the channels. In this example, the algorithm begins by preparing the classical state $|k_1, k_2, k_3, \ell_1, \ell_2, \ell_3\rangle$, where the values $k_1, k_2, k_3, \ell_1, \ell_2, \ell_3$ are chosen uniformly at random, followed by a sequence of controlled NOTs and Hadamards. Before the channels are applied, the state is thus $|\Phi^{\vec{k} \vec{\ell}}\rangle$, as described in Algorithm~\ref{alg:HS-estimate}. After the channels are applied, Bell measurements are performed, which lead to the classical bit string $i_1 i_2 i_3 j_1 j_2 j_3$. In the diagram, we depict the realization of the channels $\mathcal{N}$ and $\mathcal{M}$ as black boxes, but in a simulation of them, one might make use of additional environment qubits that are prepared and then discarded.}
\label{fig:Choi-overlap-alg}
\end{figure}
%\vspace{2.5mm}

By employing Algorithm~\ref{alg:HS-estimate} three times, we can thus estimate
\eqref{eq:HS-expand} within additive error $\varepsilon$ and with success probability not smaller than $1-\delta$, by using $O\!\left(\frac{1}{\varepsilon^2} \ln\! \left(\frac{1}{\delta}\right)\right)$ samples of the channels $\mathcal{N}$ and $\mathcal{M}$.

\subsection{Testing symmetries of channels}

\label{sec:symm-tests-channels-algos}

In this section, we leverage the methods for estimating the Hilbert--Schmidt asymmetry measure for states (Section~\ref{sec:symm-testing-states}), as well as the method for estimating the Hilbert--Schmidt distance between the Choi states of channels (Section~\ref{sec:HS-Choi-states-algo}), in order to develop an approach for estimating the covariance symmetry of a quantum channel $\mathcal{N}$ with respect to a unitary channel representation~$\{\mathcal{U}(g)\}_{g \in G}$.

Recalling the superoperator commutator notation defined in \eqref{eq:superop-comm-def}, we are interested in estimating the following asymmetry measure:
\begin{equation}
\frac{1}{\left\vert G\right\vert
}\sum_{g\in G}\left\Vert \left(  \operatorname{id}\otimes\left[
\mathcal{U}(g),\mathcal{N}\right]  \right)  (\Phi^d)\right\Vert
_{2}^{2}.
\label{eq:ch-sym-meas-def}
\end{equation}
As discussed around \eqref{eq:channel-sym-meas}, this asymmetry measure is equal to zero if and only if $\mathcal{N} \circ \mathcal{U}(g) = \mathcal{U}(g) \circ \mathcal{N}$ holds for every $g \in G$.

We begin with the following lemma:
\begin{lemma}
\label{lem:ch-sym-meas-reduction}
    Given a quantum channel $\mathcal{N}$ and a unitary channel representation $\{\mathcal{U}(g)\}_{g \in G}$, the following equality holds:
    \begin{multline}
        \frac{1}{\left\vert G\right\vert
}\sum_{g\in G}\left\Vert \left(  \operatorname{id}\otimes\left[
\mathcal{U}(g),\mathcal{N}\right]  \right)  (\Phi^d)\right\Vert
_{2}^{2} = \frac{2}{d^{2}}\operatorname{Tr}
[\operatorname{SWAP}(\mathcal{N}\mathcal{\otimes}\mathcal{N})(\operatorname{SWAP})]\\
-\frac{2}{d^{2}}\operatorname{Tr}\!\left[  \operatorname{SWAP}\left(  \frac
{1}{\left\vert G\right\vert }\sum_{g\in G}\left(  \mathcal{U}(g)\circ
\mathcal{N}\right)  \mathcal{\otimes}\left(  \mathcal{N}
\circ\mathcal{U}(g)\right)  \right)  (\operatorname{SWAP})\right].
\label{eq:ch-sym-meas-swaps}
    \end{multline}
\end{lemma}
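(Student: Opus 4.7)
The plan is to reduce the left-hand side to a combination of three overlap-type terms and then apply Lemma~\ref{lem:key-identity-SWAP} to each. First, for each $g \in G$, write $A_g \coloneqq (\operatorname{id}\otimes(\mathcal{U}(g)\circ\mathcal{N}))(\Phi^d)$ and $B_g \coloneqq (\operatorname{id}\otimes(\mathcal{N}\circ\mathcal{U}(g)))(\Phi^d)$, so that the summand equals $\Vert A_g - B_g\Vert_2^2$. Applying the expansion in \eqref{eq:HS-expansion} gives
\begin{equation}
\Vert A_g - B_g\Vert_2^2 = \operatorname{Tr}[A_g^2] + \operatorname{Tr}[B_g^2] - 2\operatorname{Tr}[A_g B_g].
\end{equation}

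Next, I would argue that $\operatorname{Tr}[A_g^2] = \operatorname{Tr}[B_g^2] = \operatorname{Tr}[(\Phi^{\mathcal{N}})^2]$, so that both ``diagonal'' terms are independent of $g$. For $A_g$, this is immediate from unitary invariance of the Hilbert--Schmidt norm, since $A_g = (I\otimes U(g))\,\Phi^{\mathcal{N}}\,(I\otimes U(g)^\dag)$. For $B_g$, I would use the standard transpose identity $(I\otimes U(g))|\Phi^d\rangle = (U(g)^T\otimes I)|\Phi^d\rangle$ to push the $\mathcal{U}(g)$ from the second factor onto the (reference) first factor as $\overline{\mathcal{U}(g)}$, after which $B_g$ is likewise related to $\Phi^{\mathcal{N}}$ by conjugation with a unitary acting on the first factor alone, whence $\operatorname{Tr}[B_g^2] = \operatorname{Tr}[(\Phi^{\mathcal{N}})^2]$. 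This small identity is the only step I expect to require any care; it is routine but is the one place where the structure of the Choi state is used nontrivially.

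Now apply Lemma~\ref{lem:key-identity-SWAP} with $\mathcal{N} = \mathcal{M}$ to obtain
\begin{equation}
\operatorname{Tr}[(\Phi^{\mathcal{N}})^2] = \frac{1}{d^2}\operatorname{Tr}[\operatorname{SWAP}(\mathcal{N}\otimes\mathcal{N})(\operatorname{SWAP})],
\end{equation}
and apply it again with the channel pair $(\mathcal{U}(g)\circ\mathcal{N},\,\mathcal{N}\circ\mathcal{U}(g))$ to obtain
\begin{equation}
\operatorname{Tr}[A_g B_g] = \frac{1}{d^2}\operatorname{Tr}\!\left[\operatorname{SWAP}\bigl((\mathcal{U}(g)\circ\mathcal{N})\otimes(\mathcal{N}\circ\mathcal{U}(g))\bigr)(\operatorname{SWAP})\right].
\end{equation}

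Finally, I would average over $g \in G$: the diagonal contribution gives $\frac{2}{d^2}\operatorname{Tr}[\operatorname{SWAP}(\mathcal{N}\otimes\mathcal{N})(\operatorname{SWAP})]$ (since the summand does not depend on $g$), while the cross term, by linearity of both the trace and the tensor-product superoperator inside the $\operatorname{SWAP}$ sandwich, pulls the average of $g$ inside to yield precisely the second term in \eqref{eq:ch-sym-meas-swaps}. Combining the two contributions produces the claimed identity.
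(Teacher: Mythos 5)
Your proposal is correct and follows essentially the same route as the paper's proof: expand each summand via \eqref{eq:HS-expansion}, show the two diagonal terms equal $\operatorname{Tr}[(\Phi^{\mathcal{N}})^2]$ using unitary invariance and the transpose trick, apply Lemma~\ref{lem:key-identity-SWAP} to the diagonal and cross terms, and average over $g$ by linearity. The one step you flagged as requiring care (handling $\operatorname{Tr}[B_g^2]$ via the transpose identity) is treated in exactly the same way in the paper.
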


\begin{proof}
Consider that, for all $g \in G$,
\begin{align}
& \left\Vert \left(  \operatorname{id}\otimes\left[
\mathcal{U}(g),\mathcal{N}\right]  \right)  (\Phi^d)\right\Vert
_{2}^{2} \notag \\
&  = \left\Vert \Phi^{\mathcal{U}(g)\circ \mathcal{N}}-\Phi^{\mathcal{N}\circ\mathcal{U}(g)}\right\Vert _{2}^{2} \\
&  =\operatorname{Tr}[(\Phi^{\mathcal{U}(g)\circ \mathcal{N}}
)^{2}]+\operatorname{Tr}[(\Phi^{\mathcal{N}\circ\mathcal{U}(g)}
)^{2}]-2\operatorname{Tr}[\Phi^{\mathcal{U}(g)\circ \mathcal{N}}
\Phi^{\mathcal{N}\circ\mathcal{U}(g)}]\\
&  =2\left(  \operatorname{Tr}[(\Phi^{\mathcal{N}})^{2}
]-\operatorname{Tr}[\Phi^{\mathcal{U}(g)\circ \mathcal{N}}\Phi^{\mathcal{N}\circ\mathcal{U}(g)}]\right)  ,
\end{align}
where we made use of the expansion in \eqref{eq:HS-expansion}, as well as the equalities
\begin{equation}
\operatorname{Tr}[(\Phi^{\mathcal{U}(g)\circ \mathcal{N}})^{2}] 
=\operatorname{Tr}[(\Phi^{\mathcal{N}})^{2}],\qquad 
\operatorname{Tr}[(\Phi^{\mathcal{N}\circ\mathcal{U}(g)})^{2}] 
=\operatorname{Tr}[(\Phi^{\mathcal{N}})^{2}].
\label{eq:simplifying-eqs-ch-meas}
\end{equation}
The equalities in \eqref{eq:simplifying-eqs-ch-meas} follow because
\begin{align}
\operatorname{Tr}[(\Phi^{\mathcal{U}(g)\circ\mathcal{N}})^{2}]  &
=\operatorname{Tr}[\{  (\operatorname{id}\otimes(\mathcal{U}
(g)\circ\mathcal{N}))(\Phi^{d})\}  ^{2}]\\
& =\operatorname{Tr}[\{  (\operatorname{id}\otimes\mathcal{N})(\Phi
^{d})\}  ^{2}]\label{eq:unitary-invariance-1}\\
& =\operatorname{Tr}[(\Phi^{\mathcal{N}})^{2}],\\
\operatorname{Tr}[(\Phi^{\mathcal{N}\circ\mathcal{U}(g)})^{2}]  &
=\operatorname{Tr}[\{  (\operatorname{id}\otimes(\mathcal{N}
\circ\mathcal{U}(g)))(\Phi^{d})\}  ^{2}]\\
& =\operatorname{Tr}[\{  (\mathcal{U}^{T}(g)\otimes\mathcal{N})(\Phi
^{d})\}  ^{2}] \label{eq:transpose-trick-1} \\
& =\operatorname{Tr}[\{  (\operatorname{id}\otimes\mathcal{N})(\Phi
^{d})\}  ^{2}]\label{eq:unitary-invariance-2}\\
& =\operatorname{Tr}[(\Phi^{\mathcal{N}})^{2}].
\end{align}
The equalities in \eqref{eq:unitary-invariance-1} and
\eqref{eq:unitary-invariance-2} in turn follow because the function $\operatorname{Tr}
[\sigma^{2}]$ depends only on the eigenvalues of $\sigma$, and its eigenvalues
are invariant under the action of a unitary channel. The equality in \eqref{eq:transpose-trick-1} follows
from the transpose trick \cite[Exercise~3.7.12]{wilde_2017}; i.e., the identity $(\operatorname{id}
\otimes\mathcal{U})(\Phi^{d})=(\mathcal{U}^{T}\otimes\operatorname{id}
)(\Phi^{d})$ holds for every unitary channel $\mathcal{U}$, where the
transpose channel is defined as $\mathcal{U}^{T}(\cdot)=U^{T}(\cdot
)\overline{U}$, with $\overline{U}$ the matrix realized from $U$ by entrywise
complex conjugation.
Now employing Lemma~\ref{lem:key-identity-SWAP}, we can write
\begin{align}
\operatorname{Tr}[(\Phi^{\mathcal{N}})^{2}] &  =\frac{1}{d^{2}
}\operatorname{Tr}[\operatorname{SWAP}(\mathcal{N}\mathcal{\otimes
}\mathcal{N})(\operatorname{SWAP})],
\\
\operatorname{Tr}[\Phi^{\mathcal{U}(g)\circ \mathcal{N}}\Phi
^{\mathcal{N}\circ\mathcal{U}(g)}] &  =\frac{1}{d^{2}}
\operatorname{Tr}[\operatorname{SWAP}(\left(  \mathcal{U}(g)\circ
\mathcal{N}\right)  \mathcal{\otimes}\left(  \mathcal{N}
\circ\mathcal{U}(g)\right)  )(\operatorname{SWAP})],
\end{align}
which finally implies the claim in \eqref{eq:ch-sym-meas-swaps}.
\end{proof}

\medskip
In order to estimate the channel asymmetry measure in \eqref{eq:ch-sym-meas-def}, it follows from Lemma~\ref{lem:ch-sym-meas-reduction} that 
we can make use of Algorithm~\ref{alg:HS-estimate} to estimate the following two
quantities:
\begin{align}
& \frac{1}{d^{2}}\operatorname{Tr}
[\operatorname{SWAP}(\mathcal{N}\mathcal{\otimes}\mathcal{N})(\operatorname{SWAP})],
\label{eq:1st-quant-meas-ch-sym}\\
& \frac{1}{d^{2}}\operatorname{Tr}\!\left[  \operatorname{SWAP}\left(  \frac
{1}{\left\vert G\right\vert }\sum_{g\in G}\left(  \mathcal{U}(g)\circ
\mathcal{N}\right)  \mathcal{\otimes}\left(  \mathcal{N}
\circ\mathcal{U}(g)\right)  \right)  (\operatorname{SWAP})\right],
\label{eq:2nd-quant-meas-ch-sym}
\end{align}
subtract the estimates, and multiply by two.
For estimating the quantity in~\eqref{eq:2nd-quant-meas-ch-sym}, similar to how we did in Algorithm~\ref{alg:twirled-state-estimate}, we can slightly revise
Algorithm~\ref{alg:HS-estimate} such that $g\in G$ is chosen uniformly at random in each step.

\begin{remark}
\label{rem:extension-ch-symm}
More generally, a quantum channel $\mathcal{N}$ can possess a covariance symmetry of the following form:
\begin{equation}
    \mathcal{N} \circ \mathcal{U}(g) = \mathcal{V}(g) \circ \mathcal{N} \qquad \forall g \in G,
\end{equation}
where $\{\mathcal{U}(g)\}_{g\in G}$ and $\{\mathcal{V}(g)\}_{g\in G}$ are unitary channel representations of a group $G$. This more general symmetry occurs especially in the case in which the dimensions of the channel input and output differ (as is the case, e.g., for the quantum erasure channel \cite{wilde_2017}).

We note here that all of the observations from this section apply to this more general case. Namely, the asymmetry measure from \eqref{eq:ch-sym-meas-def} generalizes to
\begin{multline}
        \frac{1}{\left\vert G\right\vert
}\sum_{g\in G}\left\Vert \Phi^{\mathcal{N} \circ \mathcal{U}(g)} - \Phi^{\mathcal{V}(g) \circ \mathcal{N}} \right\Vert
_{2}^{2} 
= \frac{2}{d^{2}}\operatorname{Tr}
[\operatorname{SWAP}(\mathcal{N}\mathcal{\otimes}\mathcal{N})(\operatorname{SWAP})]\\
-\frac{2}{d^{2}}\operatorname{Tr}\!\left[  \operatorname{SWAP}\left(  \frac
{1}{\left\vert G\right\vert }\sum_{g\in G}\left(  \mathcal{V}(g)\circ
\mathcal{N}\right)  \mathcal{\otimes}\left(  \mathcal{N}
\circ\mathcal{U}(g)\right)  \right)  (\operatorname{SWAP})\right],
    \end{multline}
    where the equality follows from essentially the same proof given for Lemma~\ref{lem:ch-sym-meas-reduction}. Then we can again make use of Algorithm~\ref{alg:HS-estimate}, in a similar fashion as discussed around \eqref{eq:2nd-quant-meas-ch-sym}, in order to estimate the asymmetry measure above.
\end{remark}

\subsection{Testing symmetries of Lindbladians}

\label{sec:symm-tests-lindbladians-algos}

In this section, we apply the symmetry testing algorithm from Section~\ref{sec:symm-tests-channels-algos} to the task of measuring the symmetry of a Lindbladian $\mathcal{L}$, as defined in \eqref{eq:lindblad-master-equation}. Given that the channel realized by the master equation in \eqref{eq:lindblad-master-equation} is $e^{\mathcal{L}t}$, our basic idea is to test for symmetry of this channel by means of the algorithm from Section~\ref{sec:symm-tests-channels-algos}. As discussed previously, this amounts to estimating the two terms in \eqref{eq:1st-quant-meas-ch-sym} and \eqref{eq:2nd-quant-meas-ch-sym} using Algorithm~\ref{alg:HS-estimate}, but with the replacement $\mathcal{N} \to e^{\mathcal{L}t}$, and combining the estimates according to \eqref{eq:ch-sym-meas-swaps}. The result is to form an estimate of  the following asymmetry measure:
\begin{equation}
a(\mathcal{L},t,\{U(g)\}_{g\in G})\coloneqq \frac{1}{\left\vert G\right\vert
}\sum_{g\in G}\left\Vert \left(  \operatorname{id}\otimes\left[
\mathcal{U}(g),e^{\mathcal{L}t}\right]  \right)  (\Phi^d)\right\Vert
_{2}^{2},
\label{eq:asymmetry-measure}
\end{equation}
In order to do so, we require a means by which the channel $e^{\mathcal{L}t}$ can be realized or simulated. We can accomplish the latter by employing one of several quantum algorithms for simulating Lindbladian evolutions \cite{Childs2016EfficientDynamics,Cleve2016EfficientEvolution,KSMM22,Schlimgen2022QuantumOperators,Suri2022Two-UnitarySimulation} (see \cite{Miessen2022QuantumDynamics} for a review).

The basic condition for symmetry of a Lindbladian $\mathcal{L}$ with respect to a unitary channel representation is as follows \cite{Holevo1993,Holevo1996,holevo1996covariant}:
\begin{equation}
\mathcal{L}\circ\mathcal{U}(g)=\mathcal{U}(g)\circ\mathcal{L}\qquad\forall
g\in G. \label{eq:lindbladian-symmetry}
\end{equation}
An alternative definition for symmetry of a Lindbladian $\mathcal{L}$ with respect to a unitary channel representation $\{\mathcal{U}(g)\}_{g\in G}$ is similar to what we defined in \eqref{eq:channel-symm-def-1}--\eqref{eq:channel-symm-def-2}, for channel symmetry \cite{Holevo1993,Holevo1996,holevo1996covariant}:
\begin{equation}
e^{\mathcal{L}t}\circ\mathcal{U}(g)=\mathcal{U}(g)\circ e^{\mathcal{L}t} \qquad  \forall t\in\mathbb{R}, g\in G
\label{eq:def-Lindblad-sym} .
\end{equation}
In the following proposition, we recall the well known fact that these two definitions are actually equivalent:

\begin{proposition}
The symmetry condition in \eqref{eq:def-Lindblad-sym} holds if and only if it
holds for the Lindbladian $\mathcal{L}$, so that

\end{proposition}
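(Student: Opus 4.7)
The plan is to prove the equivalence by showing each direction separately, exploiting the series representation $e^{\mathcal{L}t}=\sum_{n=0}^{\infty}\mathcal{L}^{n}t^{n}/n!$ already recorded in the introduction. The whole argument reduces to the standard fact that a superoperator commutes with another iff it commutes with every power of it, combined with differentiation at $t=0$.

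For the direction $\eqref{eq:def-Lindblad-sym}\Rightarrow\mathcal{L}\circ\mathcal{U}(g)=\mathcal{U}(g)\circ\mathcal{L}$, I would fix $g\in G$, take the assumed identity $e^{\mathcal{L}t}\circ\mathcal{U}(g)=\mathcal{U}(g)\circ e^{\mathcal{L}t}$, and differentiate both sides with respect to $t$ at $t=0$. Using $\frac{d}{dt}e^{\mathcal{L}t}\big|_{t=0}=\mathcal{L}$ and the linearity of composition with the (fixed, time-independent) unitary channel $\mathcal{U}(g)$, this yields exactly $\mathcal{L}\circ\mathcal{U}(g)=\mathcal{U}(g)\circ\mathcal{L}$. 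No subtlety arises here because the series for $e^{\mathcal{L}t}$ converges absolutely in any operator norm on the finite-dimensional space of superoperators, so term-by-term differentiation is justified.

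For the converse direction, I would assume $\mathcal{L}\circ\mathcal{U}(g)=\mathcal{U}(g)\circ\mathcal{L}$ and first prove by a one-line induction on $n$ that $\mathcal{L}^{n}\circ\mathcal{U}(g)=\mathcal{U}(g)\circ\mathcal{L}^{n}$ for every nonnegative integer $n$: the base case $n=0$ is trivial, and the inductive step follows from $\mathcal{L}^{n+1}\circ\mathcal{U}(g)=\mathcal{L}\circ(\mathcal{L}^{n}\circ\mathcal{U}(g))=\mathcal{L}\circ(\mathcal{U}(g)\circ\mathcal{L}^{n})=(\mathcal{L}\circ\mathcal{U}(g))\circ\mathcal{L}^{n}=\mathcal{U}(g)\circ\mathcal{L}^{n+1}$. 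Plugging this into the power series and using linearity and the absolute convergence of $\sum_{n}\mathcal{L}^{n}t^{n}/n!$ to interchange $\mathcal{U}(g)$ with the summation, I obtain $e^{\mathcal{L}t}\circ\mathcal{U}(g)=\mathcal{U}(g)\circ e^{\mathcal{L}t}$ for every $t\in\mathbb{R}$.

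I do not anticipate any genuine obstacle; the only points requiring any care are (i) justifying the interchange of the infinite sum with composition by $\mathcal{U}(g)$, which is immediate on the finite-dimensional superoperator space where everything converges absolutely, and (ii) noting that $\mathcal{U}(g)$ does not depend on $t$, so differentiation at $t=0$ commutes with precomposition and postcomposition by $\mathcal{U}(g)$. Both directions hold for each $g\in G$ individually, which gives the stated biconditional quantified over the group.
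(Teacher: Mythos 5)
Your proposal is correct and follows essentially the same route as the paper: differentiation of $e^{\mathcal{L}t}\circ\mathcal{U}(g)=\mathcal{U}(g)\circ e^{\mathcal{L}t}$ at $t=0$ for one direction, and induction on powers of $\mathcal{L}$ fed into the exponential series for the other. The extra remarks on absolute convergence and interchanging the sum with composition are fine but not needed beyond what the paper's proof already implicitly uses in finite dimensions.
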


\begin{proof}
Suppose that \eqref{eq:def-Lindblad-sym} holds. We then find that
\begin{equation}
\left.  \frac{\partial}{\partial t}\left(  e^{\mathcal{L}t}\circ
\mathcal{U}(g)\right)  \right\vert _{t=0}=\left.  \frac{\partial}{\partial
t}\left(  \mathcal{U}(g)\circ e^{\mathcal{L}t}\right)  \right\vert _{t=0}.
\end{equation}
The left-hand side then evaluates to $\mathcal{L}\circ\mathcal{U}(g)$ and the
right-hand side to $\mathcal{U}(g)\circ\mathcal{L}$, concluding the proof of
the if-part of the proposition. To see the other implication (the only-if
part), suppose that \eqref{eq:lindbladian-symmetry} holds. Then
\begin{align}
e^{\mathcal{L}t}\circ\mathcal{U}(g)    =\sum_{\ell=0}^{\infty}\frac{\left(
\mathcal{L}^{\ell}\circ\mathcal{U}(g)\right)  t^{\ell}}{\ell!}
 =\sum_{\ell=0}^{\infty}\frac{\left(  \mathcal{U}(g)\circ\mathcal{L}^{\ell
}\right)  t^{\ell}}{\ell!}
  =\mathcal{U}(g)\circ e^{\mathcal{L}t},
\end{align}
where the second equality follows from repeated application of \eqref{eq:lindbladian-symmetry}.
\end{proof}

\medskip
In fact, the main finding of \cite{Holevo1993} establishes a much stronger result: the symmetry condition in \eqref{eq:lindbladian-symmetry} is equivalent to the existence of a representation of $\mathcal{L}$ of the form in \eqref{eq:lindblad-master-equation}, such that the completely positive map $(\cdot) \to \sum_k L_k (\cdot) L_k^\dag $ is covariant with respect to $\{\mathcal{U}(g)\}_{g\in G}$ and $[U(g), H] = 0 $ for all $g \in G$.

For small $t$, we perform a Taylor expansion of the Lindbladian term contained in the asymmetry measure as defined in \eqref{eq:asymmetry-measure}, in order to elucidate a relation between approximate symmetry of the channel $e^{\mathcal{L}t}$ and the Lindbladian $\mathcal{L}$:
\begin{align}
    &\frac{1}{|G|}\sum_{g \in G}\left\Vert \left( \operatorname{id} \otimes \left[\mathcal{U}(g), e^{\mathcal{L}t}\right] \right) (\Phi^{d}) \right\Vert_{2}^{2} \\ 
    &= \frac{1}{|G|}\sum_{g \in G}\left\Vert \left( \operatorname{id} \otimes [\mathcal{U}(g), \operatorname{id} + \mathcal{L}t + O(t^2)] \right) (\Phi^d) \right\Vert_{2}^{2} \\
    &= \frac{1}{|G|}\sum_{g \in G}\left\Vert \left( \operatorname{id} \otimes \left([\mathcal{U}(g), \operatorname{id}] + [\mathcal{U}(g), \mathcal{L}t] + [\mathcal{U}(g), O(t^2)] \right)\right) (\Phi^d) \right\Vert_{2}^{2} \\
    &= \frac{1}{|G|}\sum_{g \in G}\left\Vert \left( \operatorname{id} \otimes \left([\mathcal{U}(g), \mathcal{L}]t + [\mathcal{U}(g), O(t^2)] \right)\right) (\Phi^d) \right\Vert_{2}^{2} \\
    &= \frac{t^2}{|G|}\sum_{g \in G}\left\Vert \left( \operatorname{id} \otimes [\mathcal{U}(g), \mathcal{L}] \right) (\Phi^d) \right\Vert_{2}^{2} + O(t^3).
\end{align}

\section{Simulations}

\label{sec:sims}

In this section, we first describe two examples of open quantum systems, namely, the amplitude damping channel and a two-qubit spin chain. We subsequently present simulation results obtained from Qiskit implementations of the aforementioned systems, wherein we test them for symmetry with respect to the finite discrete group $\mathbb{Z}_2$.\footnote{All code used to run simulations, generate plots, and perform proof-related calculations is available at \url{https://github.com/radulaski/SymmetryTestingQuantumAlgorithms}.}

In the case of the amplitude damping channel, we use the algorithm discussed around \eqref{eq:1st-quant-meas-ch-sym}--\eqref{eq:2nd-quant-meas-ch-sym}  to estimate the asymmetry measure in \eqref{eq:ch-sym-meas-swaps} and then plot the same as a function of $\Gamma t$, where $\Gamma$ represents the rate of decay per unit time and $t$ denotes time. We find that, for all values of $\Gamma t$, when testing for $Z$ symmetry (i.e., when our chosen unitary group representation for $\mathbb{Z}_2$ is $\{ U(g) \}_{g \in \mathbb{Z}_2} = \{ I, Z \}$), the asymmetry measure is approximately equal to zero with accuracy $\epsilon = 0.01$. On the other hand, we find that the $X$ asymmetry measure diverges from zero with increasing values of $\Gamma t$, which is consistent with the well known fact that the amplitude damping channel is not symmetric with respect to the representation $\{I, X\}$. Later in this section, we show that it varies with~$\Gamma t$ as $\frac{1}{2} \left( 1 - e^{-\Gamma t} \right)^2$, which is consistent with our simulation results.

Similarly, we test a two-qubit spin-chain system for $\operatorname{SWAP}$, $Z_1Z_2$, and $X_1X_2$ symmetries. We find symmetry to be preserved in the first two cases, wherein the corresponding asymmetry measures are found to be equal to zero. In the case of the $X_1X_2$ symmetry test, however, we find that symmetry is broken. Later in this section, we derive the precise formula according to which the $X_1 X_2$ asymmetry measure is found to depend on $\Gamma$, $t$, and~$J$. Both of the aforementioned examples are discussed in more detail in the subsequent subsections, along with the obtained simulation results and the methods whereby the simulations were performed.

\subsection{Amplitude damping channel}

\begin{figure}
    \centering
    \includegraphics[width=0.95\textwidth]{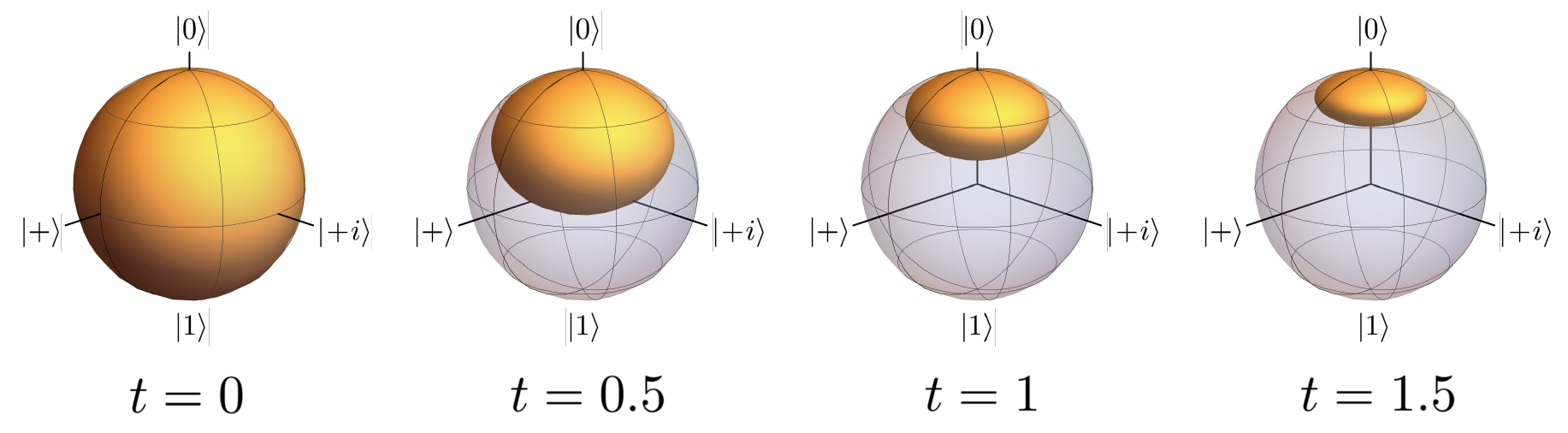}
    \caption{\small Illustration of the action of the $\Gamma=1$ amplitude damping channel on the Bloch sphere over time. All states decay exponentially fast to $|0\rangle$.}
    \label{fig:enter-label}
\end{figure}

\label{sec:amp-damp}

The amplitude damping channel is a quantum channel that models loss of energy from a system to its environment. This can be used to describe open quantum systems that interact with their environment via processes such as spontaneous emission of a single photon from a two-level atomic system.

Continuous-time amplitude damping is generated by a Lindbladian using the raising operator $\sigma^{+} \coloneqq (X+iY)/2$ as a jump operator:
\begin{equation}
    \label{eq:amp-damp-lindbladian}
    \mathcal{L}(\rho) = \Gamma \left(\sigma^+\rho\sigma^- - \frac{1}{2}\left\{\sigma^-\sigma^+, \rho\right\} \right),
\end{equation}
where $\Gamma \geq 0$ represents the rate of $\lvert 1\rangle \rightarrow \lvert 0\rangle$ decay per unit time and $\sigma^- \coloneqq (\sigma^+)^{\dag} = (X -i Y)/2$. We can obtain the superoperator $e^{\mathcal{L}t}$ representing time evolution under this Lindbladian for a time $t$ by mapping Hilbert space operators to Liouville--Fock superoperators under the rule
\begin{equation}
    A \rho B \mapsto (B^{\intercal} \otimes A) \lvert \rho \rangle\!\rangle,
    \label{eq:vectorization}
\end{equation}
where $A$ and $B$ are Hilbert space operators, and $\lvert \rho \rangle\!\rangle$ is the ``vectorized'' version of the density operator, formed by stacking the columns of $\rho$. Applying this to the Lindbladian yields
\begin{align}
    \mathcal{L}(\rho) &= \Gamma \left(\sigma^+\rho\sigma^- - \frac{1}{2}\left\{\sigma^-\sigma^+, \rho\right\}\right) \\
    &\mapsto L |\rho\rangle\!\rangle \coloneqq \Gamma\left(\sigma^+\otimes\sigma^+ - \frac{1}{2}I\otimes (\sigma^-\sigma^+) - \frac{1}{2}(\sigma^-\sigma^+)\otimes I\right) |\rho\rangle\!\rangle,
\end{align}
so that the time evolution superoperator corresponds to
\begin{align}
    e^{L t} &= \exp{
        \begin{pmatrix}
            0 & 0 & 0 & \Gamma t \\
            0 & -\frac{\Gamma t}{2} & 0 & 0 \\
            0 & 0 & -\frac{\Gamma t}{2} & 0 \\
            0 & 0 & 0 & -\Gamma t
        \end{pmatrix}
    }
    =
    \begin{pmatrix}
        1 & 0 & 0 & 1-e^{-\Gamma t} \\
        0 & e^{-\Gamma t / 2} & 0 & 0 \\
        0 & 0 & e^{-\Gamma t / 2} & 0 \\
        0 & 0 & 0 & e^{-\Gamma t}
    \end{pmatrix}.
\end{align}
The action of this matrix on a vectorized density matrix is
\begin{equation}
    e^{Lt}|\rho\rangle\!\rangle = 
    e^{L t}
    \begin{pmatrix}
        \rho_{00} \\ \rho_{10} \\ \rho_{01} \\ \rho_{11}
    \end{pmatrix} =
    \begin{pmatrix}
        \rho_{00} + (1-e^{-\Gamma t})\rho_{11} \\
        e^{-\Gamma t / 2} \rho_{10} \\
        e^{-\Gamma t / 2} \rho_{01} \\
        e^{-\Gamma t} \rho_{11}
    \end{pmatrix}.
\end{equation}
De-vectorizing the above, we find that
\begin{equation}
    e^{\mathcal{L}t}(\rho) = \begin{pmatrix}
        \rho_{00} + (1-e^{-\Gamma t})\rho_{11} & e^{-\Gamma t / 2} \rho_{01} \\
        e^{-\Gamma t / 2} \rho_{10} & e^{-\Gamma t} \rho_{11} \\
    \end{pmatrix}.
    \label{eq:big-gamma-amp-damp}
\end{equation}
It is well known that the time-independent amplitude damping channel $\mathcal{D}_{\gamma}$ for a probability of decay $\gamma$ can be represented by Kraus operators as
\begin{gather}
    K_{0} \coloneqq \begin{pmatrix}
                1 & 0 \\
                0 & \sqrt{1 - \gamma}
            \end{pmatrix}, \qquad
    K_{1} \coloneqq  \begin{pmatrix}
                0 & \sqrt{\gamma} \\
                0 & 0
            \end{pmatrix},
\end{gather}
so that
\begin{equation}
\label{eq:kraus-rep-amp-damp}
    \mathcal{D}_{\gamma}(\rho) = K_{0} \rho K_{0}^{\dag} + K_{1} \rho K_{1}^{\dag} = \begin{pmatrix} \rho_{00} + \gamma\rho_{11} & \sqrt{1 - \gamma}\rho_{01} \\ \sqrt{1 - \gamma}\rho_{10} & (1 - \gamma)\rho_{11} \end{pmatrix}.
\end{equation}
The equivalence of the two representations of the amplitude damping channel in \eqref{eq:big-gamma-amp-damp} and \eqref{eq:kraus-rep-amp-damp} shows that $\gamma = 1-e^{-\Gamma t}$.

\subsubsection{Dependence of \texorpdfstring{$X$}{X} asymmetry measure on \texorpdfstring{$\Gamma t$}{Gamma t}}

\begin{proposition}
\label{prop:X-asym-amp-damp}
    For the amplitude damping channel in \eqref{eq:big-gamma-amp-damp}, the $X$ asymmetry measure defined from \eqref{eq:asymmetry-measure} is given by
    \begin{align}
        a(\mathcal{L}, t, \{ I, X \}) = \frac{1}{2}(1-e^{-\Gamma t})^{2},
    \end{align}
     where $X$ is the $\sigma_{X}$ Pauli matrix.
\end{proposition}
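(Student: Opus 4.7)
The plan is to exploit that for $G=\mathbb{Z}_2$ with the representation $\{I,X\}$, the $g=I$ term in the sum from \eqref{eq:asymmetry-measure} vanishes identically, so only the $g=X$ term contributes. Writing $\mathcal{X}(\cdot) \coloneqq X(\cdot) X$ and using the identification $e^{\mathcal{L}t} = \mathcal{D}_\gamma$ with $\gamma = 1 - e^{-\Gamma t}$ that was already established via \eqref{eq:big-gamma-amp-damp} and \eqref{eq:kraus-rep-amp-damp}, the quantity to compute reduces to
\begin{equation}
a(\mathcal{L}, t, \{I,X\}) = \frac{1}{2}\left\Vert (\operatorname{id} \otimes [\mathcal{X}, \mathcal{D}_\gamma])(\Phi^2) \right\Vert_2^2.
\end{equation}

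First I would compute the superoperator commutator $[\mathcal{X}, \mathcal{D}_\gamma]$ on a generic single-qubit operator $\rho = \sum_{k,\ell} \rho_{k\ell} |k\rangle\!\langle \ell|$. Using the explicit matrix form of $\mathcal{D}_\gamma$ from \eqref{eq:kraus-rep-amp-damp} together with $X\rho X = \sum_{k,\ell} \rho_{k\ell} |1-k\rangle\!\langle 1-\ell|$, a short direct calculation shows that the off-diagonal entries cancel and the diagonal entries combine into the clean identity
\begin{equation}
[\mathcal{X}, \mathcal{D}_\gamma](\rho) = X \mathcal{D}_\gamma(\rho) X - \mathcal{D}_\gamma(X \rho X) = -\gamma \operatorname{Tr}(\rho)\, Z.
\end{equation}
This is the core computational step of the proof.

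Next I would apply this identity termwise to the maximally entangled state $\Phi^2 = \frac{1}{2}\sum_{i,j}|i\rangle\!\langle j|\otimes |i\rangle\!\langle j|$. Because $\operatorname{Tr}(|i\rangle\!\langle j|) = \delta_{ij}$, only the diagonal terms survive:
\begin{equation}
(\operatorname{id}\otimes [\mathcal{X}, \mathcal{D}_\gamma])(\Phi^2) = -\frac{\gamma}{2}\sum_i |i\rangle\!\langle i| \otimes Z = -\frac{\gamma}{2}\, I \otimes Z.
\end{equation}

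Finally I would take the squared Hilbert--Schmidt norm using \eqref{eq:HS-norm}: since $\|I \otimes Z\|_2^2 = \operatorname{Tr}[I_4] = 4$, the squared norm equals $\gamma^2$, and dividing by $|G|=2$ yields $a(\mathcal{L},t,\{I,X\}) = \gamma^2/2 = \tfrac{1}{2}(1-e^{-\Gamma t})^2$, as claimed. I do not anticipate a genuine obstacle; the only delicate point is bookkeeping the $2 \times 2$ matrix entries in the commutator step, which could alternatively be streamlined by expanding $\mathcal{D}_\gamma$ in the Pauli basis and using $[X,I]=[X,X]=0$, $[X,Y]=2iZ$, $[X,Z]=-2iY$ to locate the single surviving term, but the brute-force route above is short enough.
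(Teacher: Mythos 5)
Your proof is correct, and it takes a genuinely different route from the paper's. You compute the superoperator commutator directly and observe the clean identity $[\mathcal{X},\mathcal{D}_\gamma](\rho)=-\gamma\Tr(\rho)\,Z$ (which checks out: the off-diagonal entries cancel exactly and the diagonals give $\mp\gamma\Tr(\rho)$), so that $(\operatorname{id}\otimes[\mathcal{X},\mathcal{D}_\gamma])(\Phi^2)=-\tfrac{\gamma}{2}I\otimes Z$ and the squared Hilbert--Schmidt norm is read off immediately as $\gamma^2$. The paper instead never computes the commutator as an operator: it expands the squared distance via \eqref{eq:HS-expansion} into the two Choi-state purities and the overlap $\Tr[\Phi^{\mathcal{N}}\Phi^{\mathcal{M}}]$ with $\mathcal{N}=\mathcal{X}\circ\mathcal{D}_\gamma$ and $\mathcal{M}=\mathcal{D}_\gamma\circ\mathcal{X}$, evaluates each via the formula $\Tr[\Phi^{\mathcal{N}}\Phi^{\mathcal{M}}]=\tfrac{1}{d^2}\sum_{i,j}\Tr[\mathcal{N}(|i\rangle\!\langle j|)\mathcal{M}(|j\rangle\!\langle i|)]$ and tables of channel actions on elementary matrices, and obtains $\tfrac{1}{2}(\gamma^2-2\gamma+2)-(1-\gamma)=\gamma^2/2$. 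Your approach is shorter and more structurally illuminating---it exhibits the asymmetry as a single traceless operator $-\tfrac{\gamma}{2}I\otimes Z$ whose norm is trivial to evaluate---while the paper's decomposition has the advantage of mirroring exactly the quantities (purity and overlap terms) that Algorithm~\ref{alg:HS-estimate} estimates on a quantum computer, so the analytical calculation doubles as a consistency check on the simulation pipeline. One small presentational point: you should note explicitly (as you implicitly do) that the paper's commutator convention \eqref{eq:superop-comm-def} gives $[\mathcal{X},\mathcal{D}_\gamma]=\mathcal{X}\circ\mathcal{D}_\gamma-\mathcal{D}_\gamma\circ\mathcal{X}$, matching the order in your core identity; the overall sign is immaterial here since only the norm is needed.
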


\begin{proof}
Let us consider two channels, denoted by $\mathcal{D}_{\gamma}$ and $\mathcal{X}$. $\mathcal{D}_\gamma$ is the amplitude damping channel, where $\gamma$ denotes the probability of decay.  Its action on a density matrix $\rho$ is defined as in \eqref{eq:kraus-rep-amp-damp}.
The action of $\mathcal{X}$ is defined as $\mathcal{X}(\rho) = X \rho X^{\dag}$. Using these definitions, we calculate the actions of these channels on the elementary matrices ${\{|i \rangle\!\langle j|\}}_{i,j \in \{0, 1\}}$ as follows:
\begin{center}
\def\arraystretch{1.6}
\begin{tabular}{|c|c|}
 \hline
 $\mathcal{D}_\gamma(|0 \rangle\!\langle 0|) = |0 \rangle\!\langle 0|$ & $\mathcal{X}(|0 \rangle\!\langle 0|) = |1 \rangle\!\langle 1|$ \\
 \hline
 $\mathcal{D}_\gamma(|0 \rangle\!\langle 1|) = \sqrt{1-\gamma}|0 \rangle\!\langle 1|$ & $\mathcal{X}(|0 \rangle\!\langle 1|) = |1 \rangle\!\langle 0|$ \\
 \hline
 $\mathcal{D}_\gamma(|1 \rangle\!\langle 0|) = \sqrt{1-\gamma}|1 \rangle\!\langle 0|$ & $\mathcal{X}(|1 \rangle\!\langle 0|) = |0 \rangle\!\langle 1|$ \\
 \hline
 $\mathcal{D}_\gamma(|1 \rangle\!\langle 1|) = \gamma|0 \rangle\!\langle 0| + (1 - \gamma)|1 \rangle\!\langle 1|$ & $\mathcal{X}(|1 \rangle\!\langle 1|) = |0 \rangle\!\langle 0|$ \\
 \hline
\end{tabular}
\end{center}
Next, we define two channels $\{\mathcal{N}, \mathcal{M}\}$, as follows:
\begin{align}
    \mathcal{N} &\coloneqq \mathcal{X} \circ \mathcal{D}_\gamma,\qquad 
    \mathcal{M} \coloneqq \mathcal{D}_\gamma \circ \mathcal{X}.
\end{align}
The actions of the above defined channels with respect to a density matrix $\rho$ are given by $\mathcal{N}(\rho) = \mathcal{X}(\mathcal{D}_\gamma(\rho))$ and $\mathcal{M}(\rho) = \mathcal{D}_\gamma(\mathcal{X}(\rho))$. Again, we may use the above definitions to calculate the following actions:

\vspace{.1in}
\begin{center}
\def\arraystretch{1.6}
\begin{tabular}{|c|c|}
 \hline
 $\mathcal{N}(|0 \rangle\!\langle 0|) = |1 \rangle\!\langle 1|$ & $\mathcal{M}(|0 \rangle\!\langle 0|) = \gamma|0 \rangle\!\langle 0| + (1 - \gamma)|1 \rangle\!\langle 1|$ \\
 \hline
 $\mathcal{N}(|0 \rangle\!\langle 1|) = \sqrt{1-\gamma}|1 \rangle\!\langle 0|$ & $\mathcal{M}(|0 \rangle\!\langle 1|) = \sqrt{1 - \gamma}|1 \rangle\!\langle 0|$ \\
 \hline
 $\mathcal{N}(|1 \rangle\!\langle 0|) = \sqrt{1-\gamma}|0 \rangle\!\langle 1|$ & $\mathcal{M}(|1 \rangle\!\langle 0|) = \sqrt{1-\gamma}|0 \rangle\!\langle 1|$ \\
 \hline
 $\mathcal{N}(|1 \rangle\!\langle 1|) = (1 - \gamma)|0 \rangle\!\langle 0| + \gamma|1 \rangle\!\langle 1|$ & $\mathcal{M}(|1 \rangle\!\langle 1|) = |0 \rangle\!\langle 0|$ \\
 \hline
\end{tabular}
\end{center}
\vspace{.1in}

Let us consider a general unitary representation of the finite discrete group $\mathbb{Z}_2$, given by $\{U(g) \}_{g \in \mathbb{Z}_{2}} = \{ I, W \}$, where $I$ is the two-qubit identity operator, and $W$ is some two-qubit unitary operator satisfying $W^2 = I$. Furthermore, let the unitary channels constituting $\{\mathcal{U}(g)\}_{g \in \mathbb{Z}_{2}}$ and corresponding to $I$ and $W$ be denoted by $\mathcal{I}$ and $\mathcal{W}$ respectively. We may then define the asymmetry measure given in \eqref{eq:asymmetry-measure}, with respect to some Lindbladian channel $e^{\mathcal{L}t}$ and the aforementioned unitary representation $\{ I, W \}$, as follows:
\begin{align}
    &a(\mathcal{L}, t, \{ I, W \}) \notag \\
    &= \frac{1}{2} \sum_{g \in \mathbb{Z}_{2}} \left\Vert \left( \operatorname{id} \otimes \left[ \mathcal{U}(g), e^{\mathcal{L}t} \right] \right) \left( \Phi^{2} \right) \right\Vert_{2}^{2} \\
    &= \frac{1}{2} \left( \left\Vert \left( \operatorname{id} \otimes \left[ \mathcal{I}, e^{\mathcal{L}t} \right] \right) \left( \Phi^{2} \right) \right\Vert_{2}^{2} + \left\Vert \left( \operatorname{id} \otimes \left[ \mathcal{W}, e^{\mathcal{L}t} \right] \right) \left( \Phi^{2} \right) \right\Vert_{2}^{2} \right) \\
    &= \frac{1}{2} \left( \left\Vert \left( \operatorname{id} \otimes \left[ \mathcal{W}, e^{\mathcal{L}t} \right] \right) \left( \Phi^{2} \right) \right\Vert_{2}^{2} \right) \\
    &= \frac{1}{2} \left( \left\Vert \Phi^{\mathcal{W} \circ e^{\mathcal{L}t}} - \Phi^{e^{\mathcal{L}t} \circ \mathcal{W}} \right\Vert_{2}^{2} \right)
    \label{eq:z2-symm-choi-diff}
\end{align}

Now, in order to compute our desired asymmetry measure, we simply substitute the Lindbladian channel $e^{\mathcal{L}t}$ by $\mathcal{D}_{\gamma}$, and the unitary channel $\mathcal{W}$ by $\mathcal{X}$. We then have
\begin{align}
    &a(\mathcal{L}, t, \{ I, X \}) \notag \\ 
    &= \frac{1}{2} \left( \left\Vert \Phi^{\mathcal{X} \circ \mathcal{D}_{\gamma}} - \Phi^{\mathcal{D}_{\gamma} \circ \mathcal{X}} \right\Vert_{2}^{2} \right) \\ 
    &= \frac{1}{2} \left( \left\Vert \Phi^{\mathcal{N}} - \Phi^{\mathcal{M}} \right\Vert_{2}^{2} \right) \\
    &= \frac{1}{2} \left( \operatorname{Tr}\!\left[ \left( \Phi^{\mathcal{N}} \right)^2 \right] + \operatorname{Tr}\!\left[ \left( \Phi^{\mathcal{M}} \right)^2 \right] - 2\operatorname{Tr}\!\left[ \Phi^{\mathcal{N}} \Phi^{\mathcal{M}} \right] \right) \\
    &= \operatorname{Tr} \left[ \left( \Phi^{\mathcal{D}_{\gamma}} \right)^{2} \right] - \operatorname{Tr} \left[ \Phi^{\mathcal{N}} \Phi^{\mathcal{M}} \right] \label{eq:amp-damp-symm-measure-simplified-1},
\end{align}
where the last line follows because $\operatorname{Tr} \left[ \left( \Phi^{\mathcal{D}_{\gamma}} \right)^{2} \right] = \operatorname{Tr} \left[ \left( \Phi^{\mathcal{N}} \right)^{2} \right] = \operatorname{Tr} \left[ \left( \Phi^{\mathcal{M}} \right)^{2} \right]$, which in turn follows from \eqref{eq:simplifying-eqs-ch-meas}. Additionally, from \eqref{eq:Choi-overlap-pf-next}, we know that for any two quantum channels $\mathcal{N}$ and $\mathcal{M}$, the overlap term $\operatorname{Tr}\!\left[ \Phi^{\mathcal{N}} \Phi^{\mathcal{M}} \right]$ may be expressed as
\begin{equation}
\operatorname{Tr}\!\left[ \Phi^{\mathcal{N}} \Phi^{\mathcal{M}} \right] = \frac{1}{d^2}\sum_{i, j}\mathrm{Tr} \left[ \mathcal{N}(|i \rangle\!\langle j|)\mathcal{M}(|j \rangle\!\langle i|) \right].
\end{equation}
Using the above formula, we find that
\vspace{.1in}
\begin{center}
    \def\arraystretch{1.5}
    \begin{tabular}{|c|c|c|}
        \hline
        $|i \rangle\!\langle j|$ & $\mathrm{Tr} \left[ \mathcal{D}_\gamma(|i \rangle\!\langle j|)\mathcal{D}_\gamma(|j \rangle\!\langle i|) \right]$  & $\mathrm{Tr} \left[ \mathcal{N}(|i \rangle\!\langle j|)\mathcal{M}(|j \rangle\!\langle i|) \right]$ \\
        \hline
        $|0 \rangle\!\langle 0|$ & $1$ & $1 - \gamma$ \\
        \hline
        $|0 \rangle\!\langle 1|$ & $1 - \gamma$  & $1 - \gamma$ \\
        \hline
        $|1 \rangle\!\langle 0|$ & $1 - \gamma$  & $1 - \gamma$ \\
        \hline
        $|1 \rangle\!\langle 1|$ & $\gamma^2 + (1 - \gamma)^2$  & $1 - \gamma$ \\
        \hline
    \end{tabular}
\end{center}
\vspace{.1in}
We can now calculate each of the two terms in \eqref{eq:amp-damp-symm-measure-simplified-1}. For $\operatorname{Tr}\!\left[ \left( \Phi^{\mathcal{D}_\gamma} \right)^2 \right]$, we have:
\begin{align}
    \operatorname{Tr}\!\left[ \left( \Phi^{\mathcal{D}_\gamma} \right)^2 \right] &= \frac{1}{d^2}\sum_{i, j}\mathrm{Tr} \left[ \mathcal{D}_\gamma(|i \rangle\!\langle j|)\mathcal{D}_\gamma(|j \rangle\!\langle i|) \right] \\ 
    &= \frac{1}{4}\left[\ 1 + 1 - \gamma + 1 - \gamma + \gamma^2 + (1 - \gamma)^2 \right] \\ 
    &= \frac{1}{2}\left[ \gamma^2 - 2\gamma + 2 \right].
\end{align}
For $\operatorname{Tr}\!\left[ \Phi^{\mathcal{N}} \Phi^{\mathcal{M}} \right]$, we have
\begin{align}
    \operatorname{Tr}\!\left[ \Phi^{\mathcal{N}} \Phi^{\mathcal{M}} \right] &= \frac{1}{d^2}\sum_{i,j}\operatorname{Tr}\!\left[ \mathcal{N}(|i \rangle\!\langle j|)\mathcal{M}(|j \rangle\!\langle i|) \right] \\ 
    &= \frac{1}{4}\left[ 1 - \gamma + 1 - \gamma + 1 - \gamma + 1 - \gamma \right] \\ 
    &= 1 - \gamma.
\end{align}
Plugging the above obtained results into \eqref{eq:amp-damp-symm-measure-simplified-1}, and recalling the identification $\gamma = 1-e^{-\Gamma t}$ made in the previous subsection, we conclude that
\begin{align}
    a(\mathcal{L}, t, \{I, X\}) =  \frac{\gamma^2}{2} = \frac{(1-e^{-\Gamma t})^{2}}{2}.
\end{align}
We have thus computed $X$ asymmetry measure both as a function of the overall probability of decay $\gamma$, as well as the probability of decay per unit time, $\Gamma$, and time $t$.
\end{proof}

\medskip 
We note here that it is interesting to compare the value in Proposition~\ref{prop:X-asym-amp-damp} with Proposition~IV.2 of \cite{leditzky2018}. The latter proposition evaluated an asymmetry measure of the amplitude damping channel in terms of the normalized diamond distance, which is another method for measuring the distance between two quantum channels. Therein, a value of $\frac{1}{2}\left(1-e^{-\Gamma t}\right)$ was reported. Thus, both measures increase with increasing $\Gamma t$, as would be expected for any $X$-asymmetry measure for the amplitude damping channel; however, they increase differently, due to the differing choices of measures. 

\subsubsection{Amplitude damping channel simulation results}

\begin{figure}
    \centering
    \includegraphics[width=0.60\textwidth]{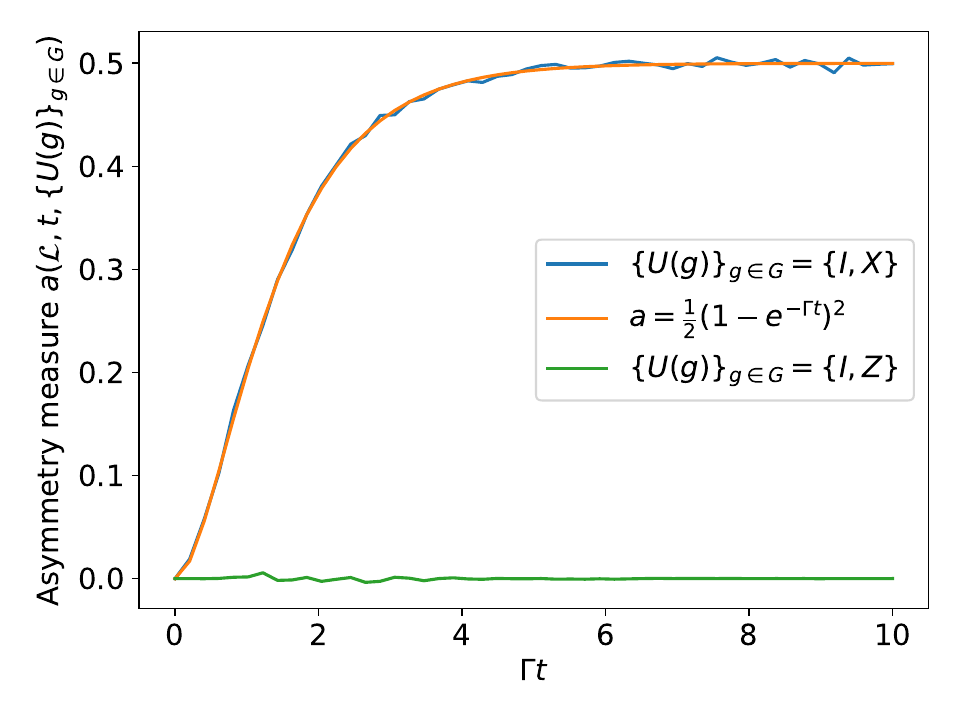}
    \caption{\small 
    Simulated results of applying Algorithm~\ref{alg:HS-estimate} to measure $X$ and $Z$ asymmetries of the single-qubit amplitude damping channel, using Qiskit's noiseless \texttt{QasmSimulator}. The simulation shows that the system maintains $Z$ symmetry for any value of amplitude dissipation $\Gamma t$, while $X$ symmetry is lost for $\Gamma t>0$. The measure of $X$ asymmetry closely matches the analytical expression $\frac{1}{2}\left(1 - e^{-\Gamma t}\right)^2$ in the absence of noise. All simulations were run with a total number of shots determined by the Hoeffding inequality (Theorem~\ref{thm:hoeffding}) with $\epsilon,\delta=0.01$. 
    }
    \label{fig:noiseless-amp-damp}
\end{figure}

We used Qiskit's \texttt{QasmSimulator} to simulate the execution of Algorithm~\ref{alg:HS-estimate} on an idealized quantum processor in order to calculate the $X$ and $Z$ symmetries of the amplitude damping channel. We implement the (non-unitary) amplitude damping channel $\mathcal{D}_{\gamma}$ by means of a unitary extension $D_{\gamma}$, which requires an additional ``environment'' qubit:
\begin{equation}
    \label{eq:amp-damp-unitary}
    D_\gamma = \begin{pmatrix}
        0 & \sqrt{\gamma} & -\sqrt{1-\gamma} & 0 \\
        0 & 0 & 0 & 1 \\
        1 & 0 & 0 & 0 \\
        0 & \sqrt{1-\gamma} & \sqrt{\gamma} & 0
    \end{pmatrix},
\end{equation}
so that
\begin{equation}
    \mathcal{D}_\gamma(\rho) = \Tr_1\!\left[ D_\gamma\left(|0\rangle\!\langle 0| \otimes \rho\right) D_\gamma^\dagger \right].
\end{equation}
Using \eqref{eq:amp-damp-unitary} to implement the amplitude damping channel, we constructed Algorithm~\ref{alg:HS-estimate} in Qiskit and used it to measure the $X$ and $Z$ asymmetries of the channel. We executed the algorithm on Qiskit's \texttt{QasmSimulator}, which emulates an idealized quantum processor with no decoherence. The results, plotted in Figure~\ref{fig:noiseless-amp-damp}, show the expected $Z$ symmetry and $X$ asymmetry, in agreement with the analytical expression.

We also executed the same symmetry tests using Qiskit's \texttt{FakeLima} backend, which provides a depolarizing noise model with parameters estimated from a real quantum processor. These results are plotted in Figure~\ref{fig:noisy-amp-damp}. The $Z$ asymmetry remains zero, while the $X$ asymmetry shows a slight reduction relative to the analytical expression, which accords with the presence of depolarizing noise.

\begin{figure}
    \centering
    \includegraphics[width=0.60\textwidth]{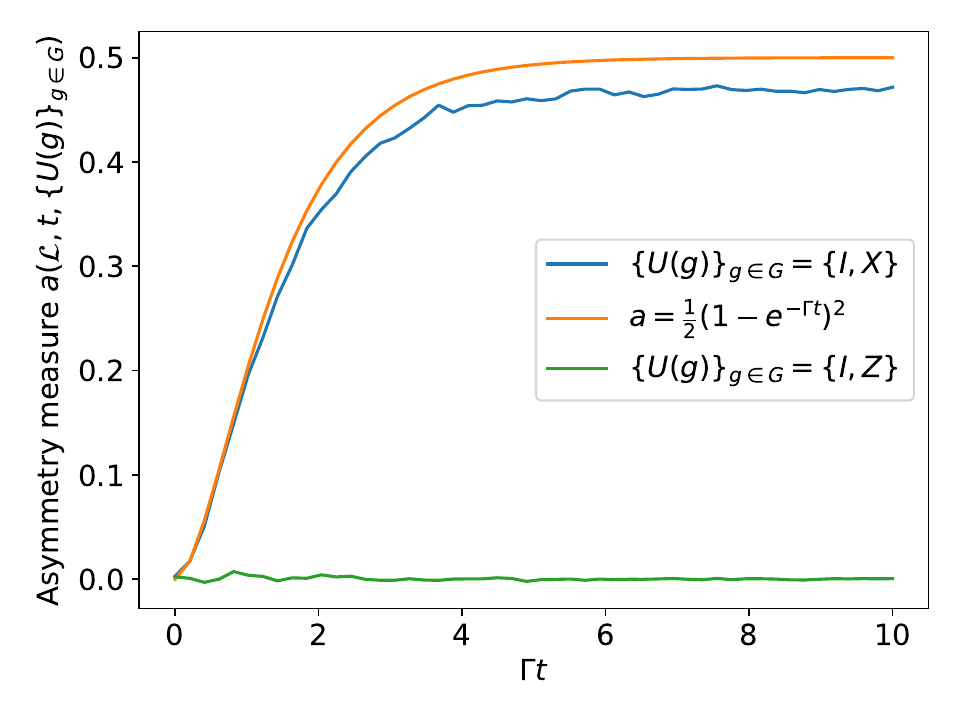}
    \caption{\small 
    Simulated results of applying Algorithm~\ref{alg:HS-estimate} to measure $X$ and $Z$ symmetries of the single-qubit amplitude damping channel, using Qiskit's \texttt{FakeLima} backend. The simulation shows that the system maintains $Z$ symmetry for any value of amplitude dissipation $\Gamma t$, while $X$ symmetry is lost for $\Gamma t>0$. The measure of $X$ asymmetry deviates slightly from the analytical relationship $\frac{1}{2}\left(1 - e^{-\Gamma t}\right)^2$ due to the simulated depolarizing noise. All simulations were run with a total number of shots determined by the Hoeffding inequality (Theorem~\ref{thm:hoeffding}) with $\epsilon,\delta=0.01$.}
    \label{fig:noisy-amp-damp}
\end{figure}

\subsection{\texorpdfstring{$XX$}{XX} Spin chain}

Systems of spin-1/2 particles with nearest-neighbor exchange interactions have been studied for nearly a century and are foundational models in the exploration of magnetism in condensed matter physics \cite{Lieb2009-pt}. See Figure~\ref{fig:spin-chain} for a visualization. In the context of quantum information, spin chains have been studied for potential applications to quantum state transfer. We consider an open $XX$ Heisenberg spin chain consisting of two particles, each of which is subject to amplitude damping dissipation. This system is governed by the Lindblad master equation
\begin{equation}
    \mathcal{L}(\rho) = -i[H, \rho] + \sum_{i=1}^2 \mathcal{L}_i(\rho),
    \label{eq:spin-chain-lindbladian}
\end{equation}
where the Hamiltonian $H$ is given by
\begin{equation}
    H = J (X_1X_2+Y_1Y_2),
\end{equation}
and each $\mathcal{L}_{i}$ term acts on qubit $i$ and is an amplitude dissipation Lindbladian, as defined in \eqref{eq:amp-damp-lindbladian}.
In the above, $J\geq 0$ represents the rate at which excitations hop from one site in the chain to the other.

\begin{figure}
    \centering
    \includegraphics[width=0.85\textwidth]{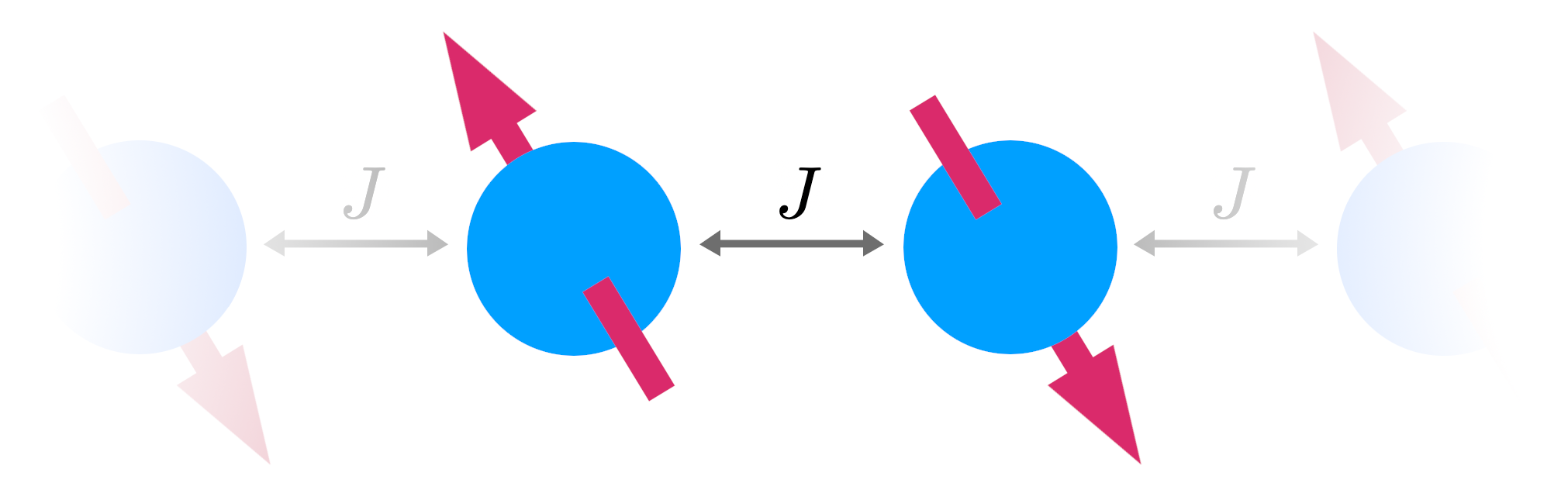}
    \caption{\small Visualization of a spin chain. Each particle's spin couples to that of its neighbors at a rate $J$. We use Algorithm~\ref{alg:HS-estimate} to examine various symmetries of a two-particle spin chain.}
    \label{fig:spin-chain}
\end{figure}

\subsubsection{Spin-chain asymmetries as a function of \texorpdfstring{$\Gamma t$}{Gamma}}

Since amplitude damping dissipation is a longitudinal interaction, this system is manifestly $Z_1Z_2$-symmetric for any amount of damping $\Gamma t$. Conversely, the $X_1X_2$ symmetry of the Hamiltonian is broken by nonzero energy dissipation. Finally, the interactions between the two halves of the system are symmetrical, and so the system is manifestly symmetric under a SWAP of the two particles. Here we use direct calculation of \eqref{eq:asymmetry-measure} to show the $Z_1Z_2$ and SWAP symmetries, and calculate the measure of $X_1X_2$ asymmetry as a function of $\Gamma t$.

\begin{proposition}
    For the open two-qubit $XX$ spin chain defined in \eqref{eq:spin-chain-lindbladian}, the $Z_1Z_2$, $\operatorname{SWAP}$, and $X_1X_2$ asymmetry measures defined from \eqref{eq:asymmetry-measure} are given by
    \begin{equation}
        a(\mathcal{L}, t, \{I,Z_1Z_2\}) = 0, \qquad
        a(\mathcal{L}, t, \{I,\operatorname{SWAP}\}) = 0,
    \end{equation}
    and
    \begin{equation}
        \label{eq:spin-chain-xx-asym}
        \begin{aligned}
        &a(\mathcal{L}, t, \{I,X_1X_2\}) = \\
        &\frac{{e^{-2 t \Gamma} \left( -t^2 \Gamma^2 \cos(4 J) - 16 J^2 \cosh(t \Gamma) + (16 J^2 + t^2 \Gamma^2) \cosh(2 t \Gamma) \right)}}{32 J^2 + 2t^2 \Gamma^2 }.
        \end{aligned}
    \end{equation}
\end{proposition}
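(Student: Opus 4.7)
The plan is to dispatch the two symmetry cases by verifying covariance directly at the Lindbladian level, which then lifts to the channel level by the equivalence proposition stated just above, and then to attack the $X_1 X_2$ asymmetry by an explicit closed-form computation via the vectorization scheme of Section~\ref{sec:amp-damp}.

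For the $Z_1 Z_2$ case, I would first show $\mathcal{L}\circ\mathcal{U}(Z_1 Z_2)=\mathcal{U}(Z_1 Z_2)\circ\mathcal{L}$ and invoke the equivalence between Lindbladian and channel covariance, which forces $[\mathcal{U}(Z_1 Z_2), e^{\mathcal{L}t}]=0$ and hence makes $a(\mathcal{L},t,\{I,Z_1 Z_2\})$ vanish identically. For the Hamiltonian piece, since $ZX=-XZ$ and $ZY=-YZ$ the two sign flips in each of $X_1 X_2$ and $Y_1 Y_2$ cancel, giving $[Z_1 Z_2,H]=0$. For the dissipative piece, although $Z_i\sigma^{\pm}_i Z_i=-\sigma^{\pm}_i$, the dissipator $L\rho L^{\dagger}-\tfrac12\{L^{\dagger}L,\rho\}$ is quadratic in $L$ and invariant under $L\mapsto -L$, so each $\mathcal{L}_i$ is manifestly covariant under $\mathcal{U}(Z_1 Z_2)$. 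For SWAP the same logic applies: $\operatorname{SWAP}$ commutes with the exchange-symmetric $X_1 X_2+Y_1 Y_2$, and conjugation by $\operatorname{SWAP}$ exchanges $\sigma^{\pm}_1\leftrightarrow\sigma^{\pm}_2$, permuting the two identical single-site dissipators $\mathcal{L}_1$ and $\mathcal{L}_2$.

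For the $X_1 X_2$ asymmetry the plan is a direct evaluation. First I would vectorize $\mathcal{L}$ via \eqref{eq:vectorization}, producing an explicit $16\times 16$ generator $L$ in which the Hamiltonian contributes $-i(I\otimes H - H^{\top}\otimes I)$ and each $\mathcal{L}_i$ contributes a local copy of the single-qubit amplitude-damping generator derived in Section~\ref{sec:amp-damp}. Diagonalization of $L$ decouples along excitation-number sectors: the zero- and two-excitation sectors contribute purely exponential decay in $\Gamma t$, while the one-excitation sector, where the $XX+YY$ coupling acts, contributes a complex-conjugate pair of eigenvalues of the form $-\alpha\Gamma\pm i\beta J$, whose exponentials combine to produce the mixture of $\cosh(k\Gamma t)$ factors and oscillatory factors in $J$ visible in \eqref{eq:spin-chain-xx-asym}.

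With $e^{\mathcal{L}t}$ explicit, the final step repeats the Choi-overlap reduction of \eqref{eq:z2-symm-choi-diff}--\eqref{eq:amp-damp-symm-measure-simplified-1}, which yields
\begin{equation*}
a(\mathcal{L},t,\{I,X_1 X_2\}) = \operatorname{Tr}[(\Phi^{e^{\mathcal{L}t}})^2] - \operatorname{Tr}\!\left[\Phi^{\mathcal{U}_{XX}\circ e^{\mathcal{L}t}}\,\Phi^{e^{\mathcal{L}t}\circ\mathcal{U}_{XX}}\right],
\end{equation*}
and then evaluates the two Choi overlaps using the basis-expansion identity \eqref{eq:Choi-overlap-pf-next} with $d=4$, i.e., a sum of $16$ traces of products of $4\times 4$ matrices built from the matrix entries of $e^{\mathcal{L}t}$. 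The principal obstacle is bookkeeping rather than conceptual: the algebraic simplification of this sum down to the compact form \eqref{eq:spin-chain-xx-asym} is unwieldy by hand and is most reliably performed in a computer algebra system (consistent with the accompanying code repository). Given the explicit diagonalization, the stated identity reduces to a mechanical simplification.
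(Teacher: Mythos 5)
Your proposal is correct, and for the $X_1X_2$ case it follows essentially the same route as the paper: vectorize $\mathcal{L}$ via \eqref{eq:vectorization}, exponentiate the resulting $16\times 16$ generator to get the superoperator form of $e^{\mathcal{L}t}$, form the Choi states, reduce \eqref{eq:z2-symm-choi-diff} to the difference of two Choi overlaps, and delegate the final algebraic simplification to a computer algebra system --- which is precisely what the paper does (it cites Mathematica for this step). Where you genuinely depart from the paper is in the two symmetric cases. The paper treats $Z_1Z_2$ and $\operatorname{SWAP}$ by the same brute-force superoperator computation and simply reports that the measure evaluates to zero, whereas you verify covariance at the generator level: $[Z_1Z_2,H]=0$ because the two anticommutations in each of $X_1X_2$ and $Y_1Y_2$ cancel; the dissipators are covariant because $Z_i\sigma_i^{\pm}Z_i=-\sigma_i^{\pm}$ only flips the sign of a jump operator, under which $L(\cdot)L^{\dagger}-\tfrac{1}{2}\{L^{\dagger}L,\cdot\}$ is invariant; and conjugation by $\operatorname{SWAP}$ permutes the two identical single-site dissipators. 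You then lift $[\mathcal{U}(g),\mathcal{L}]=0$ to $[\mathcal{U}(g),e^{\mathcal{L}t}]=0$ via the equivalence proposition of Section~\ref{sec:symm-tests-lindbladians-algos}, so the asymmetry measure vanishes by faithfulness of the Hilbert--Schmidt norm. This argument is sound and arguably preferable: it explains \emph{why} the symmetries hold, extends verbatim to chains of arbitrary length rather than just two sites, and requires no symbolic computation for those two cases; the price is a non-uniform treatment relative to the $X_1X_2$ case, where an explicit computation (and ultimately a CAS) is still unavoidable.
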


\begin{proof}
We calculate the Choi states in \eqref{eq:z2-symm-choi-diff} in terms of the superoperator representations of the channels $\mathcal{W}$ and $e^{\mathcal{L}t}$.
Applying the prescription~\eqref{eq:vectorization} to the terms of the Lindbladian \eqref{eq:spin-chain-lindbladian} and operator $W$, we obtain
\begin{equation}
    W \mapsto W^\intercal \otimes W,
\end{equation}
and
\begin{multline}
    \mathcal{L} \mapsto -i((I\otimes H) - (H^\intercal\otimes I)) \\
    + \sum_i \Gamma\left( \sigma_i^+\otimes\sigma_i^+ - \frac{1}{2}I\otimes (\sigma_i^-\sigma_i^+) - \frac{1}{2}(\sigma_i^-\sigma_i^+)\otimes I\right).
\end{multline}
The latter is straightforwardly exponentiated to obtain a superoperator matrix form of $e^{\mathcal{L}t}$.

Applying these representations to \eqref{eq:choi-state-def}, inserting these Choi states into \eqref{eq:z2-symm-choi-diff} and simplifying with the aid of the computer algebra system Mathematica (code available with our arXiv post), for the three cases of interest $W \in \{Z_1Z_2, \operatorname{SWAP}, X_1X_2\}$ we obtain the expressions given in the proposition.
\end{proof}

\subsubsection{Spin-chain simulation results}

To emulate the dynamics described by \eqref{eq:spin-chain-lindbladian} on a quantum processor as part of the algorithm discussed around \eqref{eq:asymmetry-measure}, we must address two issues. First, the dissipative terms in $\mathcal{L}$ must be replaced by unitary extensions acting on additional ``environment'' qubits, in order to make them implementable by unitary gates. Second, noncommuting terms in $\mathcal{L}$ make it necessary to use Trotterization to implement~$e^{\mathcal{L}t}$.

\begin{figure}
    \centering
    \includegraphics[width=0.60\textwidth]{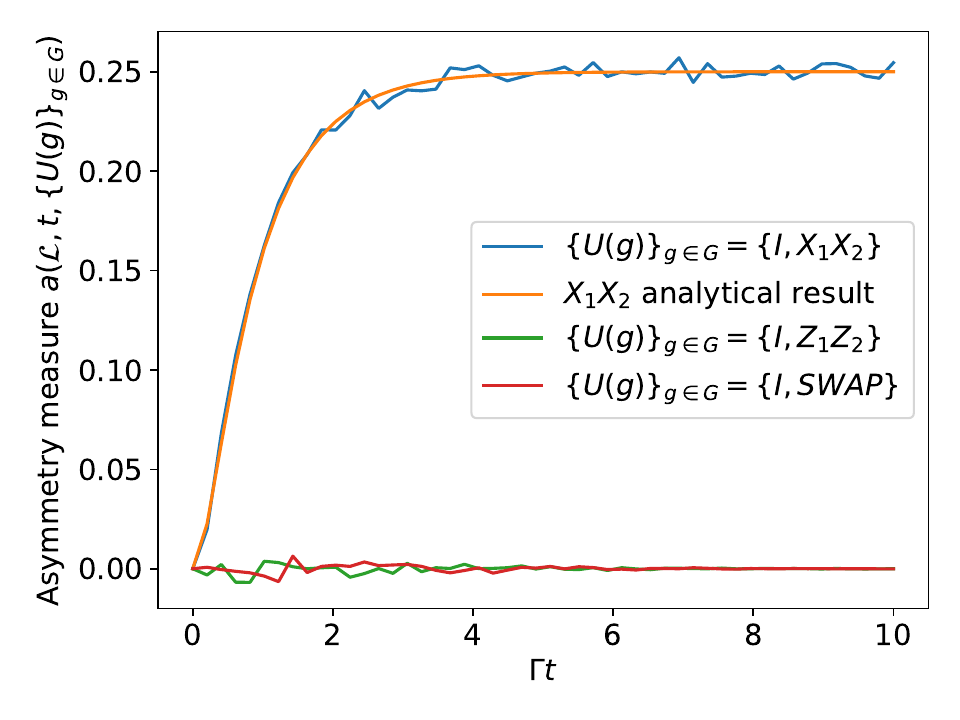}
    \caption{\small Simulation results for applying Algorithm~\ref{alg:HS-estimate} to the two-qubit spin chain using Qiskit's noiseless \texttt{QasmSimulator}. We test the system for $X_1X_2$, $Z_1Z_2$, and $\operatorname{SWAP}$ symmetries. In the latter two cases, the system exhibits symmetry; the asymmetry measure deviates from zero only due to sampling error. In contrast, the system's lack of $X_1X_2$ symmetry is evident for nonzero values of $\Gamma t$, and the value of the asymmetry measure in this case agrees closely with the analytical result from~\eqref{eq:spin-chain-xx-asym}.}
    \label{fig:noiseless-spin-chain}
\end{figure}

We Trotterize the Lindbladian following the prescription in \cite[Proposition~2]{Childs2016EfficientDynamicsArXiv}:
\begin{equation}
    e^{\mathcal{L}t} = \exp\!\left(t\sum_{i=1}^m \mathcal{L}_i\right) \approx \left(\prod_{i=1}^m e^{\mathcal{L}_it/2N} \prod_{j=m}^1 e^{\mathcal{L}_it/2N} \right)^{N},
    \label{eq:lindblad-trotter}
\end{equation}
The specific ordering of terms in this product (forwards and then backwards) results in the first and second orders of the Taylor expansions of the left and right sides of \eqref{eq:lindblad-trotter} to agree exactly.

We note that the terms $ -i[H_j,(\cdot)]$ arising from the Hamiltonian induce unitary evolution, and so can be implemented simply as $e^{-iH_jt}$. The two dissipative Lindblad terms can be implemented using $D_\gamma$, the unitary extension of the amplitude damping channel \eqref{eq:amp-damp-unitary}, provided that the environment qubit is reset to zero before each application of $D_\gamma$. Therefore, each Trotter step of the spin chain Lindbladian can be implemented using
\begin{align}
    \prod_{i=1}^m e^{\mathcal{L}_it/2N} &\mapsto e^{-iX_1X_2 t/2N} e^{-iY_1Y_2 t/2N} D^1_{1-e^{-\Gamma t/2N}} D^2_{1-e^{-\Gamma t/2N}} , \\
    \prod_{i=m}^1 e^{\mathcal{L}_it/2N} &\mapsto D^2_{1-e^{-\Gamma t/2N}} D^1_{1-e^{-\Gamma t/2N}} e^{-iY_1Y_2 t/2N} e^{-iX_1X_2 t/2N}.
\end{align}
This implementation is essentially the same as that presented in Figure~1 of~\cite{Cleve2016EfficientEvolution}, up to a Trotterization of the unitary dynamics, and reordering of the Trotter terms.

\begin{figure}
    \centering
    \includegraphics[width=0.60\textwidth]{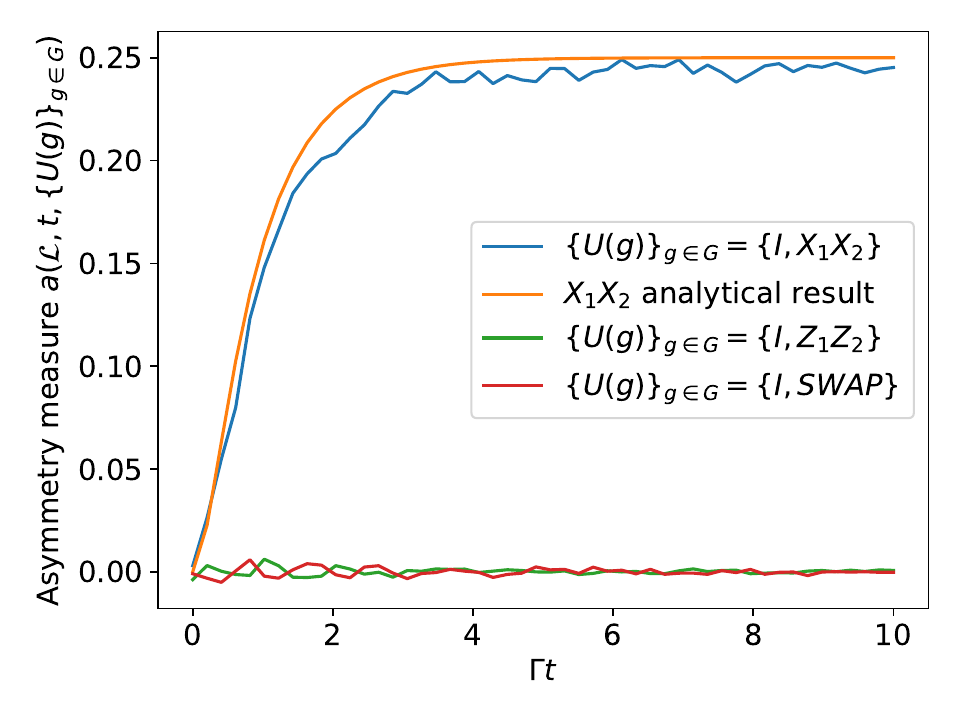}
    \caption{\small Simulation results for applying Algorithm~\ref{alg:HS-estimate} to the two-qubit spin chain using Qiskit's \texttt{FakeLima} backend, which includes realistic depolarizing noise. We test the system for $X_1X_2$, $Z_1Z_2$, and $\operatorname{SWAP}$ symmetries. As in the noiseless simulation (Figure~\ref{fig:noiseless-spin-chain}), we verify $Z_1Z_2$ and $\operatorname{SWAP}$ symmetries, and $X_1X_2$ asymmetry, although the latter deviates from its analytical expression slightly due to the depolarizing noise.}
    \label{fig:noisy-spin-chain}
\end{figure}

We used this formulation to implement Algorithm~\ref{alg:HS-estimate} in Qiskit. We then executed it on Qiskit's \texttt{QasmSimulator} to test the 2-particle spin chain system for $X_1X_2$, $Z_1Z_2$, and $\operatorname{SWAP}$ symmetries, using a number of shots determined by Hoeffding inequality (Theorem~\ref{thm:hoeffding}) with $\epsilon,\delta = 0.01$. The resulting estimates of the asymmetry measure are plotted in Figure~\ref{fig:noiseless-spin-chain}, where we can see that $Z_1Z_2$ and $\operatorname{SWAP}$ symmetries are maintained in the presence of amplitude damping, while $X_1X_2$ symmetry is broken to the degree specified in~\eqref{eq:spin-chain-xx-asym}.

As in the case of our previous simulations of the amplitude damping channel, we also test our spin-chain system for symmetry in the presence of a depolarizing noise model. We do this, as before, by running our code using Qiskit's \texttt{FakeLima} backend. Consistent with the nature of the noise model imported, we find in Figure~\ref{fig:noisy-spin-chain} that the obtained plot of the $X_1 X_2$ asymmetry measure is slightly reduced with respect to the analytical expression as given in \eqref{eq:spin-chain-xx-asym}, while both $Z_1 Z_2$ and $\operatorname{SWAP}$ symmetries appear to be preserved.  

\section{Measurements: Estimating Hilbert--Schmidt distance and testing
symmetries}

\label{sec:meas-ch}

In this section, we consider a special case of the developments in Sections~\ref{sec:HS-Choi-states-algo} and \ref{sec:symm-tests-channels-algos},
when the channels of interest are measurement channels, meaning that they can
be written in the following form:
\begin{equation}
\mathcal{N}(\omega)=\sum_{x}\operatorname{Tr}[N_{x}\omega]|x\rangle\!\langle x|,
\label{eq:povm-channel}
\end{equation}
where $\omega$ is an input state being measured, $\left\{  N_{x}\right\}
_{x}$ is a positive operator-valued measure (POVM) (satisfying $N_{x}\geq0$
for all $x$ and $\sum_{x}N_{x}=I$), and $\left\{  |x\rangle\right\}  _{x}$ is
an orthonormal basis, such that the classical state $|x\rangle\!\langle x|$
encodes the measurement outcome.

We begin by providing an algorithm for estimating the Hilbert--Schmidt
distance of the Choi states of two measurement channels (Section~\ref{sec:meas-ch-HS}). In principle, since
measurement channels are a particular kind of channel, one could simply apply
Algorithm~\ref{alg:HS-estimate} for this task. However, our developments below demonstrate that
this algorithm can be significantly simplified in this case, as a consequence of
the channel outputs being classical.

After that, we then recall the definition of covariance symmetry of
measurement channels and devise an algorithm for testing this symmetry (Section~\ref{sec:meas-sym}). We note that this kind of symmetry is a special case of the channel symmetry mentioned in Remark~\ref{rem:extension-ch-symm}.

\subsection{Estimating the Hilbert--Schmidt distance of the Choi states of
measurement channels}

\label{sec:meas-ch-HS}

We are interested in estimating the Hilbert--Schmidt distance between the Choi states of two measurement channels, defined as in \eqref{eq:meas-channels-N-M} below. Since measurement channels are indeed channels, the expression for the Hilbert--Schmidt distance is precisely the same as that given in \eqref{eq:HS-expand}.

We begin our development with the following lemma, which shows how the various terms in \eqref{eq:HS-expand} simplify when $\mathcal{N}$ and  $\mathcal{M}$ are measurement channels.

\begin{lemma}
\label{lem:Choi-meas-ch-overlap}
Let $\mathcal{N}$ and $\mathcal{M}$ be measurement channels with $d$-dimensional inputs, so that
\begin{equation}
\mathcal{N}(\omega)=\sum_{x}\operatorname{Tr}[N_{x}\omega]|x\rangle\!\langle
x|,\qquad\mathcal{M}(\omega)=\sum_{x}\operatorname{Tr}[M_{x}\omega
]|x\rangle\!\langle x|,
\label{eq:meas-channels-N-M}
\end{equation}
where $\left\{  N_{x}\right\}  _{x}$ and $\left\{  M_{x}\right\}  _{x}$ are
POVMs. Then
\begin{equation}
\operatorname{Tr}[\Phi^{\mathcal{N}}\Phi^{\mathcal{M}}]=\frac{1}{d^{2}}
\sum_{x,y}\delta_{x,y}\operatorname{Tr}[\left(  N_{x}\otimes M_{y}\right)
\left(  \operatorname{SWAP}\right)  ].
\end{equation}

\end{lemma}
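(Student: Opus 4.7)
The plan is to invoke Lemma~\ref{lem:key-identity-SWAP} to rewrite $\operatorname{Tr}[\Phi^{\mathcal{N}}\Phi^{\mathcal{M}}]$ as a trace against SWAP operators, and then exploit the classical-output structure of measurement channels to collapse one of the SWAPs into a Kronecker delta. Concretely, Lemma~\ref{lem:key-identity-SWAP} gives
\begin{equation}
\operatorname{Tr}[\Phi^{\mathcal{N}}\Phi^{\mathcal{M}}]
=\frac{1}{d^{2}}\operatorname{Tr}[\operatorname{SWAP}\,(\mathcal{N}\otimes\mathcal{M})(\operatorname{SWAP})],
\end{equation}
so the remaining task is to evaluate the right-hand side using the explicit form of the measurement channels in \eqref{eq:meas-channels-N-M}.

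First I would extend the action of $\mathcal{N}$ and $\mathcal{M}$ to arbitrary (not necessarily positive) operators by linearity, so that applying $\mathcal{N}\otimes\mathcal{M}$ to the SWAP operator yields
\begin{equation}
(\mathcal{N}\otimes\mathcal{M})(\operatorname{SWAP})
=\sum_{x,y}\operatorname{Tr}[(N_{x}\otimes M_{y})\operatorname{SWAP}]\,|x\rangle\!\langle x|\otimes|y\rangle\!\langle y|.
\end{equation}
Substituting this back into the trace expression gives
\begin{equation}
\operatorname{Tr}[\Phi^{\mathcal{N}}\Phi^{\mathcal{M}}]
=\frac{1}{d^{2}}\sum_{x,y}\operatorname{Tr}[(N_{x}\otimes M_{y})\operatorname{SWAP}]\cdot
\operatorname{Tr}[\operatorname{SWAP}\,(|x\rangle\!\langle x|\otimes|y\rangle\!\langle y|)].
\end{equation}

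The second key step is to apply the identity in~\eqref{eq:swap-prod-id} once more, this time to the classical outputs, obtaining
\begin{equation}
\operatorname{Tr}[\operatorname{SWAP}\,(|x\rangle\!\langle x|\otimes|y\rangle\!\langle y|)]
=\operatorname{Tr}[|x\rangle\!\langle x|\,|y\rangle\!\langle y|]=|\langle x|y\rangle|^{2}=\delta_{x,y},
\end{equation}
since $\{|x\rangle\}_x$ is orthonormal. Plugging this back yields exactly the claimed identity. There is no real obstacle here: the argument is a two-line application of the trace--SWAP identity \eqref{eq:swap-prod-id}, first at the input side via Lemma~\ref{lem:key-identity-SWAP} and then at the output side to kill the cross terms. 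The only thing to be a bit careful about is the linear extension of the measurement channel to operators like $\operatorname{SWAP}$ that are not density operators, but this is immediate since quantum channels are linear maps on the full operator space.
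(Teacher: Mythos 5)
Your proof is correct and follows essentially the same route as the paper's: both rest on Lemma~\ref{lem:key-identity-SWAP} together with the orthonormality of the output basis producing the Kronecker delta, and both package the input-side sum as $\operatorname{Tr}[(N_x\otimes M_y)\operatorname{SWAP}]$ via the identity \eqref{eq:swap-prod-id}. The only (cosmetic) difference is that the paper expands $\operatorname{Tr}[\Phi^{\mathcal N}\Phi^{\mathcal M}]$ into the double sum $\frac{1}{d^2}\sum_{i,j}\operatorname{Tr}[\mathcal N(|i\rangle\!\langle j|)\mathcal M(|j\rangle\!\langle i|)]$ and then recombines, whereas you stay at the operator level by evaluating $(\mathcal N\otimes\mathcal M)(\operatorname{SWAP})$ directly by linearity.
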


\begin{proof}
Recalling \eqref{eq:Choi-overlap-pf-1}--\eqref{eq:Choi-overlap-pf-next}, we find that
\begin{align}
& \operatorname{Tr}[\Phi^{\mathcal{N}}\Phi^{\mathcal{M}}] \notag\\
&  =\frac{1}{d^{2}%
}\sum_{i,j}\operatorname{Tr}[\mathcal{N}(|i\rangle\!\langle j|)\mathcal{M}%
(|j\rangle\!\langle i|)]\\
&  =\frac{1}{d^{2}}\sum_{i,j}\operatorname{Tr}\!\left[  \left(  \sum
_{x}\operatorname{Tr}[N_{x}|i\rangle\!\langle j|]|x\rangle\!\langle x|\right)
\left(  \sum_{y}\operatorname{Tr}[M_{y}|j\rangle\!\langle i|]|y\rangle\!\langle
y|\right)  \right]  \\
&  =\frac{1}{d^{2}}\sum_{i,j,x,y}\operatorname{Tr}[N_{x}|i\rangle\!\langle
j|]\operatorname{Tr}[M_{y}|j\rangle\!\langle i|]\operatorname{Tr}\!\left[
|x\rangle\!\langle x|y\rangle\!\langle y|\right]  \\
&  =\frac{1}{d^{2}}\sum_{i,j,x,y}\delta_{x,y}\operatorname{Tr}[N_{x}%
|i\rangle\!\langle j|]\operatorname{Tr}[M_{y}|j\rangle\!\langle i|]\\
&  =\frac{1}{d^{2}}\sum_{x,y}\delta_{x,y}\operatorname{Tr}[\left(
N_{x}\otimes M_{y}\right)  \left(  \operatorname{SWAP}\right)  ],
\end{align}
concluding the proof.
\end{proof}

\medskip

If the inputs to the channels are $n$-qubit states and the outputs are $m$-bit
strings $\vec{x}$ and $\vec{y}$, then following the development and notation
from \eqref{eq:HS-overlap-qubits}--\eqref{eq:input-bell-state}, we can write
\begin{align}
\operatorname{Tr}[\Phi^{\mathcal{N}}\Phi^{\mathcal{M}}]  &  =\frac{1}{2^{2n}
}\sum_{\vec{x},\vec{y}}\delta_{\vec{x},\vec{y}}\operatorname{Tr}[\left(
N_{\vec{x}}\otimes M_{\vec{y}}\right)  (\text{SWAP}^{(n)})]\\
&  =\frac{1}{2^{2n}}\sum_{\vec{x},\vec{y}\in\left\{  0,1\right\}  ^{m}}
\sum_{\vec{k},\vec{\ell}\in\left\{  0,1\right\}  ^{n}}\delta_{\vec{x},\vec{y}
}\left(  -1\right)  ^{\vec{k}\cdot\vec{\ell}}\operatorname{Tr}[\left(
N_{\vec{x}}\otimes M_{\vec{y}}\right)  (\Phi^{\vec{k}\vec{\ell}})].
\end{align}
Now, by setting $Z\equiv(\vec{X},\vec{Y},\vec{K},\vec{L})$ to be a multi-indexed
random variable taking the value $\delta_{\vec{x},\vec{y}}\left(  -1\right)
^{\vec{k}\cdot\vec{\ell}}$ with probability
\begin{equation}
p(\vec{x},\vec{y},\vec{k},\vec{\ell})    =p(\vec{x},\vec{y}|\vec{k},\vec{\ell
}) \, p(\vec{k},\vec{\ell}),
\end{equation}
where
\begin{align}
     p(\vec{k},\vec{\ell})& =\frac{1}{2^{2n}},\\
p(\vec{x},\vec{y}|\vec{k},\vec{\ell})  &  =\operatorname{Tr}[\left(
N_{\vec{x}}\otimes M_{\vec{y}}\right)  (\Phi^{\vec{k}\vec{\ell}})],
\end{align}
we find from the above that its expectation is given by
\begin{equation}
\mathbb{E}[Z]=\operatorname{Tr}[\Phi^{\mathcal{N}}\Phi^{\mathcal{M}}].
\end{equation}
This leads to the following quantum algorithm for estimating
$\operatorname{Tr}[\Phi^{\mathcal{N}}\Phi^{\mathcal{M}}]$, within additive
error $\varepsilon$ and with success probability not smaller than $1-\delta$,
where $\varepsilon >0 $ and $\delta\in(0,1)$.

\begin{algorithm}
\label{alg:HS-meas-estimate} Given are quantum circuits to implement the
measurement channels $\mathcal{N}$ and~$\mathcal{M}$.

\begin{enumerate}
\item Fix $\varepsilon>0$ and $\delta\in(0,1)$. Set $T\geq\frac{2}
{\varepsilon^{2}}\ln\!\left(  \frac{2}{\delta}\right)  $ and set $t=1$.

\item Generate the bit vectors $\vec{k}$ and $\vec{\ell}$ uniformly at random.

\item Prepare the Bell state $\Phi^{\vec{k}\vec{\ell}}$ on $2n$ qubits (using
the ordering specified in~\eqref{eq:input-bell-state}).

\item Apply the tensor-product measurement channel $\mathcal{N}\otimes
\mathcal{M}$ (using the ordering specified after \eqref{eq:input-bell-state}), which leads to the measurement outcomes $\vec{x}$ and $\vec{y}$.

\item Set $Y_{t}=\delta_{\vec{x},\vec{y}}\left(  -1\right)  ^{\vec{k}\cdot
\vec{\ell}}$.

\item Increment $t$.

\item Repeat Steps 2.-6.~until $t>T$ and then output $\overline{Y}
\coloneqq\frac{1}{T}\sum_{t=1}^{T}Y_{t}$ as an estimate of $\operatorname{Tr}
[\Phi^{\mathcal{N}}\Phi^{\mathcal{M}}]$.
\end{enumerate}
\end{algorithm}

\begin{figure}
\centerline{
\Qcircuit @C=1em @R=0.8em{
|k_1\rangle & & \qw & \gate{H} & \ctrl{3} & \qw & \qw & \multigate{2}{\mathcal{N}}   & \meter & \cw & x_1 \\
|k_2\rangle  & & \qw & \gate{H} & \qw & \ctrl{3} & \qw &  \ghost{\mathcal{N}} &  \meter & \cw & x_2 \\
|k_3\rangle  & & \qw & \gate{H} & \qw & \qw & \ctrl{3} &  \ghost{\mathcal{N}}  & \meter & \cw & x_3 \\
|\ell_1 \rangle & & \qw  & \qw & \targ & \qw  & \qw & \multigate{2}{\mathcal{M}}  & \meter & \cw & y_1 \\
|\ell_2 \rangle & & \qw & \qw & \qw & \targ  & \qw & \ghost{\mathcal{M}} &  \meter & \cw & y_2 \\
|\ell_3 \rangle & & \qw & \qw & \qw & \qw & \targ  & \ghost{\mathcal{M}} &  \meter & \cw & y_3 
}
}
\caption{\small Depiction of the core quantum subroutine given in Steps~2.-4.~of Algorithm~\ref{alg:HS-meas-estimate}, such that the measurement channels $\mathcal{N}$ and $\mathcal{M}$ have three-qubit inputs and three-bit outputs. This algorithm estimates the overlap $\operatorname{Tr}[\Phi^{\mathcal{N}}\Phi^{\mathcal{M}}]$ of the Choi states of the measurement channels.
In this example, the algorithm begins by preparing the classical state $|k_1, k_2, k_3, \ell_1, \ell_2, \ell_3\rangle$, where the values $k_1, k_2, k_3, \ell_1, \ell_2, \ell_3$ are chosen uniformly at random, followed by a sequence of controlled NOTs and Hadamards. Before the measurement channels are applied, the state is thus $|\Phi^{\vec{k} \vec{\ell}}\rangle$, as described in Algorithm~\ref{alg:HS-meas-estimate}. The measurement channels are then applied, leading to the classical bit string $x_1 x_2 x_3 y_1 y_2 y_3$. In the diagram, we depict the realization of the measurement channels $\mathcal{N}$ and $\mathcal{M}$ as black boxes, but in a simulation of them, one might make use of additional environment qubits that are prepared and then discarded.}
\label{fig:Choi-overlap-meas-alg}
\end{figure}

Figure~\ref{fig:Choi-overlap-meas-alg} depicts the core quantum subroutine of Algorithm~\ref{alg:HS-meas-estimate}.
By the Hoeffding inequality (recalled as Theorem~\ref{thm:hoeffding}), we are
guaranteed that the output of Algorithm~\ref{alg:HS-meas-estimate} satisfies
\begin{equation}
\Pr\!\left[  \left\vert \overline{Y}-\operatorname{Tr}[\Phi^{\mathcal{N}}
\Phi^{\mathcal{M}}]\right\vert \leq\varepsilon\right]  \geq1-\delta,
\end{equation}
due to the choice $T\geq\frac{2}{\varepsilon^{2}}\ln\!\left(  \frac{2}{\delta
}\right)  $.

By employing Algorithm~\ref{alg:HS-meas-estimate} three times, we can
thus estimate \eqref{eq:HS-expand} for two measurement channels $\mathcal{N}$ and $\mathcal{M}$ within additive error $\varepsilon$ and
with success probability not smaller than $1-\delta$, by using $O\!\left(
\frac{1}{\varepsilon^{2}}\ln\!\left(  \frac{1}{\delta}\right)  \right)  $
samples of the measurement channels $\mathcal{N}$ and $\mathcal{M}$.

\subsection{Testing symmetries of measurement channels}

\label{sec:meas-sym}

A POVM\ $\{N_{x}\}_{x}$ is covariant  if
there exists a unitary representation $\left\{  U(g)\right\}  _{g\in G}$ of a
group $G$ such that
\begin{equation}
U(g)^{\dag}N_{x}U(g)\in\{N_{x}\}_{x}\quad\forall g\in
G,\,x.\label{eq:def-G-sym-povm}
\end{equation}
Covariant POVMs have been studied previously
\cite{D78,H11book,CdVT03,DJR05}, and they appear in several applications, including state discrimination \cite{KGDdS15} and estimation \cite{CD04}.
Connecting to our previous notion of channel symmetry from Remark~\ref{rem:extension-ch-symm}, a measurement channel
$\mathcal{N}$ is covariant if there exist unitary channel representations
$\left\{  \mathcal{U}(g)\right\}  _{g\in G}$ and $\left\{  \mathcal{W}(g)\right\}  _{g\in G}$ such
that
\begin{equation}
\mathcal{N}\circ\mathcal{U}(g)=\mathcal{W}(g)\circ\mathcal{N}\quad\forall g\in
G.\label{eq:cov-sym-meas-ch}
\end{equation}
Plugging into \eqref{eq:povm-channel}, the condition in
\eqref{eq:cov-sym-meas-ch} becomes
\begin{equation}
\sum_{x}\operatorname{Tr}[U(g)^{\dag}N_{x}U(g)\rho]|x\rangle\!\langle
x|=\sum_{x}\operatorname{Tr}[N_{x}\rho]W(g)|x\rangle\!\langle x|W(g)^{\dag
}\quad\forall g\in G.
\label{eq:meas-cov-symm-cond-1}
\end{equation}
Given that the output system is classical, we can restrict the unitary $W(g)$
to be a shift operator that realizes a permutation $\pi_{g}$ of the classical
letter~$x$, so that
\begin{equation}
W(g)|x\rangle=|\pi_{g}(x)\rangle,
\end{equation}
and thus \eqref{eq:meas-cov-symm-cond-1} becomes
\begin{align}
 \sum_{x}\operatorname{Tr}[U(g)^{\dag}N_{x}U(g)\rho]|x\rangle\!\langle
x|_{X}
&  =\sum_{x}\operatorname{Tr}[N_{x}\rho]|\pi_{g}(x)\rangle\!\langle\pi
_{g}(x)|_{X}\\
&  =\sum_{x}\operatorname{Tr}[N_{\pi_{g}^{-1}(x)}\rho]|x\rangle\!\langle
x|_{X}.
\end{align}
Since this equation holds for every input state $\rho$, we conclude that the
following condition holds for a covariant measurement channel:
\begin{equation}
U(g)^{\dag}N_{x}U(g)=N_{\pi_{g}^{-1}(x)}\quad\forall g\in
G,\ x,\label{eq:cov-povm-conseq}
\end{equation}
coinciding with the definition given in \eqref{eq:def-G-sym-povm}.

We are interested in testing the covariance symmetry of the measurement
channel $\mathcal{N}$, and we can do so by testing the following asymmetry
measure:
\begin{equation}
\frac{1}{\left\vert G\right\vert }\sum_{g\in G}\left\Vert \Phi^{\mathcal{N}
\circ\mathcal{U}(g)}-\Phi^{\mathcal{W}(g)\circ\mathcal{N}}\right\Vert _{2}
^{2},
\label{eq:asymm-meas-for-meas-ch}
\end{equation}
related to the asymmetry measure from \eqref{eq:ch-sym-meas-def}. By invoking Lemma~\ref{lem:ch-sym-meas-reduction}, we find that
\begin{multline}
\frac{1}{\left\vert G\right\vert }\sum_{g\in G}\left\Vert \Phi^{\mathcal{N}
\circ\mathcal{U}(g)}-\Phi^{\mathcal{W}(g)\circ\mathcal{N}}\right\Vert _{2}
^{2}=\frac{2}{d^{2}}\operatorname{Tr}\!\left[  \operatorname{SWAP}\left(
\mathcal{N}\otimes\mathcal{N}\right)  \left(  \operatorname{SWAP}\right)
\right]  \\
-\frac{2}{d^{2}}\operatorname{Tr}\!\left[  \operatorname{SWAP}\left(  \frac
{1}{\left\vert G\right\vert }\sum_{g\in G}\left(  \mathcal{W}(g)\circ
\mathcal{N}\right)  \otimes\left(  \mathcal{N}\circ\mathcal{U}(g)\right)
\right)  \left(  \operatorname{SWAP}\right)  \right]  .
\end{multline}
Now invoking Lemma~\ref{lem:Choi-meas-ch-overlap}, we conclude that
\begin{equation}
\operatorname{Tr}\!\left[  \operatorname{SWAP}\left(  \mathcal{N}\otimes
\mathcal{N}\right)  \left(  \operatorname{SWAP}\right)  \right]  =\sum
_{x,y}\delta_{x,y}\operatorname{Tr}[\left(  N_{x}\otimes N_{y}\right)  \left(
\operatorname{SWAP}\right)  ],
\label{eq:meas-symm-1st-term}
\end{equation}
and
\begin{multline}
\operatorname{Tr}\!\left[  \operatorname{SWAP}\left(  \frac{1}{\left\vert
G\right\vert }\sum_{g\in G}\left(  \mathcal{W}(g)\circ\mathcal{N}\right)
\otimes\left(  \mathcal{N}\circ\mathcal{U}(g)\right)  \right)  \left(
\operatorname{SWAP}\right)  \right]  \\
=\frac{1}{\left\vert G\right\vert }\sum_{g\in G}\sum_{x,y}\delta_{\pi
_{g}(x),y}\operatorname{Tr}\!\left[  \left(  N_{x}\otimes U^{\dag}
(g)N_{y}U(g)\right)  \left(  \operatorname{SWAP}\right)  \right]  .
\label{eq:meas-symm-2nd-term}
\end{multline}
The latter equality follows because $\mathcal{N}\circ\mathcal{U}(g)$ is a
measurement channel with measurement operators $\left\{  U^{\dag}
(g)N_{x}U(g)\right\}  _{x}$ while $\mathcal{W}(g)\circ\mathcal{N}$ is a
measurement channel with measurement operators $\{  N_{\pi_{g}^{-1}(x)}\}  _{x}$. As such, we can employ Algorithm~\ref{alg:HS-meas-estimate} to estimate
both terms in \eqref{eq:meas-symm-1st-term} and \eqref{eq:meas-symm-2nd-term}, and thus estimate~\eqref{eq:asymm-meas-for-meas-ch} by subtracting them and multiplying the result by $\frac{2}{d^2}$. For estimating the latter term, in each step of the algorithm, we pick $g\in G$ uniformly at random, as before.

\section{Conclusion and discussion}

\label{sec:conclusion}

In this work, we proposed asymmetry measures for quantum states, channels, and measurements, as well as efficient quantum algorithms for estimating these measures. A key component of the algorithms for channels and measurements are methods for efficiently estimating the overlap of their Choi states. We demonstrated the channel symmetry testing algorithm in two cases: the single-qubit amplitude damping channel and an open $XX$ spin chain subject to amplitude dissipation. In both cases, we simulated our  algorithm using Qiskit's simulator and found excellent agreement with the analytical expression of the asymmetry measure. 
Finally, we discussed which near-term QPU architectures maximize the system size to be tested using the developed algorithms.

% \textbf{Prospects for implementation on near-term quantum hardware}

% \label{sec:implementation}

\begin{figure}
    \centering
    \includegraphics[width=\textwidth]{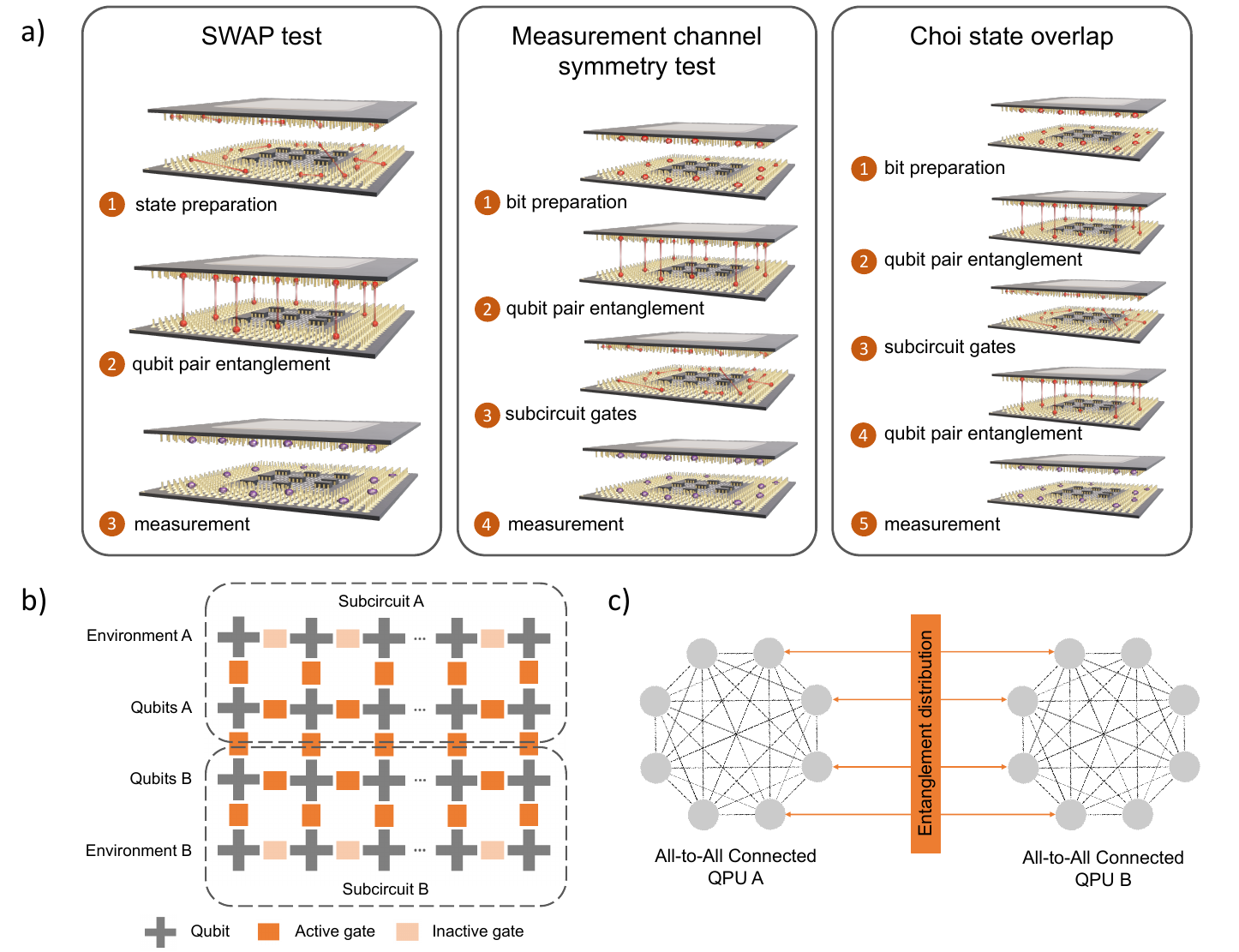}
    \caption{\small Examples of the compatibility of symmetry testing quantum algorithms with quantum processing units (QPUs) of variable connectivity. a) The two subcircuit abstraction of the developed algorithms and their implementation steps. The top chip represents subcircuit A and the bottom the subcircuit B.   b) A two-dimensional array of qubits can explore symmetries in open quantum systems using a one-dimensional chain of nearest-neighbor interactions. c) Two all-to-all connected QPUs with pre-entangled system qubits in an event-ready scheme can explore symmetries of measurements for arbitrary qubit interactions.}
    \label{fig:connectivity}
\end{figure}

\textbf{Prospects for implementing on near-term quantum hardware}---The developed quantum algorithms for symmetry testing can be readily implemented on near-term quantum hardware, as well as potentially guide the development of architectures for upcoming quantum testbeds. We have implemented the Lindbladian symmetry testing algorithm in such a way that the number of physical qubits in hardware is at least four times the number of qubits in the model. The depth of the circuit depends on the selection of Trotterization parameters, and for a specific quantum processing unit (QPU), these parameters should be selected within the hardware coherence limits. 

Furthermore, the qubit connectivity has an important practical role in enabling implementation of the developed algorithms. Each of the algorithms requires the model to be mapped twice to physical qubits in what we will call subcircuits A and B (Figure \ref{fig:connectivity}). Entangling gates are applied to pairs of qubits in subcircuits A and B close to the beginning and/or the end of the algorithm, while the rest of the algorithm requires only local gates inside the subcircuits. This algorithmic split into two computing layers that are cross-connected only once or twice during the implementation of the symmetry testing algorithms lends itself well to upcoming QPU architectures on the IBM Quantum roadmap~\cite{gambetta2022ibm}, Crossbill and Flamingo, for the purposes of maximizing the computable model size. These multi-chip processors are connected either with a smaller number of higher fidelity quantum gates implemented via short chip-to-chip connectors (Crossbill), or a larger number of slower and lower-fidelity quantum gates implemented via long-range couplers (Flamingo). In terms of symmetry testing algorithms where subcircuits A and B would be implemented on different chips, the Crossbill architecture would be suitable for models with a smaller number of qubits and deeper quantum algorithms, while the Flamingo architecture would be more suitable for larger systems that are either implemented via shallower circuits or are executed for algorithms that require only one time-step entanglement via the long-range connectors (SWAP test or measurement channel symmetry test).
 
Some of the existing monolithic quantum processors can be used to efficiently implement symmetry testing of open quantum systems in one-dimensional chain Hamiltonians, which are zoned into subcircuits A and B, as shown in Figure~\ref{fig:connectivity}b. Here, the nearest-neighbor connectivity can be supported by the Google Sycamore superconducting architecture~\cite{arute2019quantum}, while the beyond-the-nearest-neighbor interaction and multi-qubit interactions can be implemented using QuEra Aquila~\cite{wurtz2023aquila} and recent neutral atom quantum hardware advances~\cite{evered2023high}, respectively.

For testing models with higher connectivity, all-to-all connected QPUs, like those offered by IonQ \cite{monroe2021ionq} and Quantinuum \cite{stutz2022trapped} trapped ion hardware or by solid state spin-qubit systems~\cite{bradley2019ten}, can provide more versatility. Since qubits in these systems can generate spin-photon entanglement, multiple QPUs can be connected via photon-mediated entanglement distribution and double the model size in the symmetry testing algorithms (Figure~\ref{fig:connectivity}c). Here, the success of the entanglement distribution is statistical and can be utilized in the event-ready scheme, a frequently employed approach introduced in \cite{yurke1992bell} where photons originating from separate entangling processes in non-local systems become entangled on a beam-splitter and their quantum state projected in a photon-detection process. Obtaining the desired quantum state in the measurement usually takes multiple attempts, and further processing takes place only upon its confirmation when pairs of qubits in separate systems are projected onto desired Bell states. This process is suitable for implementation of the measurement symmetry test (Figure~\ref{fig:connectivity}a) which requires entanglement between subcircuits A and B only at the beginning of the algorithm. To be able to expand this two-QPU implementation from measurement symmetry testing to the state, channel, and Lindbladian symmetry testing, additional work is needed to adapt the protocol to non-deterministic Bell measurements.

\medskip

\textbf{Acknowledgements}---We acknowledge helpful discussions with  Damien Bowen, Soorya Rethinasamy, and Hanna Westerheim. MR acknowledges support by the National Science Foundation (CAREER award No.~2047564) and the Noyce Initiative. MMW acknowledges support from the National Science Foundation under Grant No.~2315398.

\small 

\bibliography{Ref}
\bibliographystyle{plain}

\end{document}